\newcommand{\R}{\mathbb{R}}
\newcommand{\E}{\mathbb{E}}
\newcommand{\zero}{\mathbf{0}}
\newcommand{\Acal}{\mathcal{A}}
\newcommand{\Dcal}{\mathcal{D}}
\newcommand{\Tcal}{\mathcal{T}}
\newcommand{\Recover}{\mathsf{RecoverOneFromAll}}
\newcommand{\lin}{\mathrm{lin}}
\newcommand{\polylog}{\mathrm{polylog}}
\newcommand{\ceil}[1]{\left\lceil #1 \right\rceil}
\def\01{\{0,1\}}
\newcommand{\makeset}{\mathrm{MakeSet}}
\newcommand{\roundedges}{\mathrm{RoundEdges}}
\newcommand{\elts}{\mathrm{Elts}}
\newcommand{\recover}{\mathrm{RecoverOneFromAll}}
\newcommand{\union}{\mathrm{Union}}
\newcommand{\findset}{\mathrm{FindSet}}
\newcommand{\rank}{\mathrm{rank}}
\newcommand{\outgoing}{\mathrm{Outgoing}}
\newcommand{\edgestoadd}{\mathrm{EdgesToAdd}}
\newcommand{\ith}{i^{\scriptsize \mbox{{\rm th}}}}
\newcommand{\jth}{j^{\scriptsize \mbox{{\rm th}}}}
\newcommand{\braket}[2]{\langle#1, #2\rangle}
\newcommand{\rk}{\mathrm{rk}}
\newtheorem{theorem}{Theorem}
\newtheorem{lemma}[theorem]{Lemma}
\newtheorem{corollary}[theorem]{Corollary}
\newtheorem{proposition}[theorem]{Proposition}
\newtheorem{remark}[theorem]{Remark}
\theoremstyle{definition}
\newtheorem{definition}[theorem]{Definition}
\newcommand{\mincutcert}{\textrm{mincut-cert}}
\newcommand{\concert}{\textrm{con-cert}}
\newcommand{\cutrk}{\textrm{-cut-rank}}
\newcommand{\taucerta}{\textrm{at-least-}\tau\textrm{-cert}}
\newcommand{\mincut}{\mathrm{MINCUT}_n}
\newcommand{\con}{\mathrm{CON}_n}
\newcommand{\Bcal}{\mathcal{B}}
\newcommand{\Temp}{\mathrm{Temp}}
\begin{document}
\title{On the query complexity of connectivity with global queries}
\author{Arinta Auza\thanks{Centre for Quantum Software and Information, University of Technology Sydney.} \and Troy Lee\thanks{Centre for Quantum Software and Information, University of Technology Sydney. Email: troyjlee@gmail.com}}
\date{}
\maketitle

\begin{abstract}
We study the query complexity of determining if a graph is connected with global queries.  
The first model we look at is matrix-vector multiplication queries to the 
adjacency matrix.  Here, for an $n$-vertex graph with adjacency matrix $A$, one can query a vector $x \in \{0,1\}^n$ and receive the answer $Ax$.  
We give a randomized algorithm that can output a spanning forest of a weighted graph with constant probability after 
$O(\log^4(n))$ matrix-vector multiplication queries to the adjacency matrix.  This complements a result of Sun et al.\ (ICALP 2019) that gives a randomized algorithm 
that can output a spanning forest of a graph after $O(\log^4(n))$ matrix-vector multiplication queries to the signed vertex-edge incidence matrix of the graph.  
As an application, we show that a quantum algorithm can output a spanning forest of an unweighted 
graph after $O(\log^5(n))$ cut queries, improving and simplifying a result of Lee, Santha, and Zhang (SODA 2021), which gave the bound $O(\log^8(n))$.  

In the second part of the paper, we turn to showing lower bounds on the linear query complexity of determining if a graph is connected.  If $w$ is the 
weight vector of a graph (viewed as an $\binom{n}{2}$ dimensional vector), in a linear query one can query any vector $z \in \R^{n \choose 2}$ and receive the answer $\langle z, w\rangle$.  We show that a zero-error randomized 
algorithm must make $\Omega(n)$ linear queries in expectation to solve connectivity.  As far as we are aware, this is the first lower bound of any kind on the unrestricted linear query complexity of 
connectivity.  We show this lower bound by looking at the linear query \emph{certificate complexity} of connectivity, and characterize this certificate complexity in a linear algebraic fashion.
\end{abstract}

\section{Introduction}
Determining whether or not a graph is connected is a fundamental problem of computer science, which has been studied in many different computational models \cite{Wig92}.  
In this paper, we study query algorithms for connectivity that make use of \emph{global} queries.  Traditionally, the most common query models for studying
graph problems have been \emph{local} query models where one can ask queries of the form (i) Is there an edge between vertices $u$ and $v$? (ii) what is the 
degree of vertex $v$? (iii) what is the $\ith$ neighbor of $v$?  For example, in the adjacency matrix model one can ask queries of type~(i), and in the 
adjacency list model one can ask queries of type (ii) and (iii).  We call these local query models as each query has information about a 
single vertex or single edge slot.  It is known that the randomized query complexity 
of connectivity on a graph with $n$ vertices and $m$ edges is $\Theta(n^2)$ in the adjacency matrix model and $\Theta(m)$ in the adjacency array model \cite{DHHM06, BGMP20}.

On the other hand, a single global query can aggregate information about many vertices and edge slots.  There has recently been a lot of interest 
in the complexity of graph problems with various global queries: matrix-vector multiplication queries to the adjacency or (signed) vertex-edge incidence matrix \cite{SWYZ19,CHL21}, cut queries \cite{RSW18,ACK21,LSZ21}, and bipartite 
independent set (BIS) queries \cite{BHRRS20}, to name a few.  If $A$ is the adjacency matrix of an $n$-vertex graph, in a matrix-vector multiplication query to the adjacency matrix 
one can ask a vector $x \in \{0,1\}^n$ and 
receive the answer $Ax \in \{0,1\}^n$, in a cut query on question $x$ one receives $x^TA(1-x)$, and in a BIS query on question $x,y \in \{0,1\}^n$, where $x,y$ are \emph{disjoint}, the 
answer is $1$ if $x^T A y$ is positive and $0$ otherwise.  Motivation to study each of these models comes from different sources: matrix-vector multiplication queries have applications to streaming algorithms \cite{AGM12}, 
cut queries are motivated by connections to submodular function minimization, and bipartite independent set 
queries have been studied in connection with reductions between counting and decision problems \cite{DL21}.

Let us take the example of matrix-vector multiplication queries to the adjacency matrix, which is the primary model we study.  This is a very powerful model, as one learns an 
$n$-dimensional vector with a single query.  On the other hand, this model can also lead to very efficient algorithms: we show that 
a randomized algorithm making $O(\log^4(n))$ many matrix-vector queries to the adjacency matrix of $G$ can output a spanning forest of $G$ with constant probability (see \cref{cor:mv}).  

This result answers a natural question left open by Sun et al.\ \cite{SWYZ19}.  They introduce the study of the complexity of graph problems with matrix-vector multiplication queries, and 
ask if the particular representation of a graph by a matrix makes a difference in the complexity of solving certain problems.  Besides the adjacency matrix, another natural 
representation of an $n$-vertex simple graph $G = (V,E)$ is the signed vertex-edge incidence matrix $A_{\pm} \in \{-1,0,1\}^{n \times \binom{n}{2}}$.  The rows of $A_{\pm}$ are labeled by vertices and the columns are labeled by 
elements of $V^{(2)}$, the set of all 2-element subsets of $V$.  Fix an ordering of the vertices.  Then 
\[
A_{\pm}(u, \{v,w\}) = 
\begin{cases}
0 & \text{ if }  \{v,w\} \not \in E \text{ or } u \not \in \{v,w\} \\
1 & \text{ if } \{v,w\} \in E  \text{ and } u \text{ is the smallest element in } \{v,w\} \\
-1 & \text{ if } \{v,w\} \in E \text{ and } u \text{ is the largest element in } \{v,w\} \enspace .

\end{cases}
\]
Sun et al.\ observe that a beautiful sketching algorithm of Ahn, Guha, and McGregor \cite{AGM12} to compute a spanning forest of a graph $G$ via $O(n \log^3(n))$ 
non-adaptive linear measurements also gives a non-adaptive randomized query algorithm that finds a spanning forest after $O(\log^4(n))$ matrix-vector 
multiplication queries to $A_{\pm}$.  

For a \emph{bipartite} graph $G$ with bipartition $V_1, V_2$, the 
\emph{bipartite adjacency matrix} is the submatrix of the adjacency matrix where rows are restricted to $V_1$ and columns to $V_2$.  Sun et al.\ show that 
$\Omega(n/\log n)$ matrix-vector multiplication queries to the \emph{bipartite} adjacency matrix (where multiplication of the vector is on the right) can be needed to 
determine if a bipartite graph is connected or not.  However, the family of instances they give can be solved by a single query to the full adjacency matrix of 
the graph (or a single matrix-vector multiplication query \emph{on the left} to the bipartite adjacency matrix).  The arguably more natural question of comparing 
the complexity of matrix-vector multiplication queries in the signed vertex-edge incidence matrix versus the adjacency matrix model was left open.  Our results mean that
connectivity cannot show a large separation between these models.

While the matrix-vector multiplication query model is very powerful, as one can also hope for very efficient algorithms it still has interesting applications to weaker models, 
like cut queries.  As an example, it is not hard to see that each entry of $A_{\pm} x$ can be computed with a constant number of cut queries, and therefore
a matrix-vector multiplication query to $A_{\pm}$ can be simulated by $O(n)$ cut queries.  Thus the randomized non-adaptive $O(\log^4(n))$ matrix-vector 
query algorithm for connectivity in this model implies a randomized non-adaptive $O(n \log^4(n))$ algorithm for connectivity in the cut query model, which 
is state-of-the-art for the cut query model \cite{ACK21}.

We look at applications of matrix-vector multiplication algorithms to \emph{quantum} algorithms using cut and BIS queries.  Recent work of Lee, Santha, and 
Zhang showed that a quantum algorithm can output a spanning forest of a graph with high probability after $O(\log^8(n))$ cut queries \cite{LSZ21}.  This is in contrast 
to the randomized case where $\Omega(n/\log n)$ cut queries can be required to determine if a graph is connected \cite{BFS86}.  A key observation 
made in \cite{LSZ21} is that a quantum algorithm with cut queries can efficiently simulate a restricted version of a matrix-vector multiplication query to the adjacency matrix.  
Namely, if $A$ is the adjacency matrix of an $n$-vertex simple graph, a quantum algorithm can compute $Ax \circ (1-x)$ for $x \in \{0,1\}^n$ with $O(\log n)$ cut queries.
Here $\circ$ denotes the Hadamard or entrywise product.  In other words, the quantum algorithm can efficiently compute the entries of $Ax$ where $x$ is $0$.  

We quantitatively improve the Lee et al.\ result to show that a quantum algorithm can output a spanning tree of a simple graph with constant probability after $O(\log^5(n))$ cut queries (\cref{thm:qcut}).  
We do this by showing that the aforementioned $O(\log^4(n))$ adjacency-matrix-vector multiplication query algorithm to compute a spanning forest also works with the 
more restrictive $Ax \circ (1-x)$ queries.  In addition to quantitatively improving the result of \cite{LSZ21}, 
this gives a much shorter proof and nicely separates the algorithm into a quantum part, simulating $Ax \circ (1-x)$ queries, and a classical randomized algorithm using $Ax \circ (1-x)$ queries.  

This modular reasoning allows us to extend the argument to other models as well.  We also show that a quantum algorithm can compute a spanning 
forest with constant probability after $\tilde O(\sqrt{n})$ BIS queries.  This is done by simulating an $Ax \circ (1-x)$ by a quantum algorithm making $O(\sqrt{n})$ 
BIS queries (\cref{thm:con_master}).  

In the second part of this work we turn to lower bounds for connectivity, and here we focus on the model of \emph{linear queries}.  Let $G$ be an $n$-vertex weighted and undirected 
graph on vertex set $V$, and let $w : V^{(2)} \rightarrow \R_{\ge 0}$ be its weight function, which assigns a non-negative (possibly zero) weight to each 
two-element subset of $V$.  We will view $w$ as an $\binom{n}{2}$ dimensional vector.  In the linear query model, one can query any $x \in \R^{\binom{n}{2}}$ and receive 
the answer $\braket{w}{x}$.  As far as we are aware, no lower bound at all was known on the query complexity of connectivity in this unrestricted linear query model.  The reason is that 
the number of bits in a query answer is unbounded in the linear query model, which causes problems for any kind of information-theoretic lower bound technique.
Most lower bounds for connectivity go through the model of communication complexity and restrict the set of ``hard" instances to simple graphs.  However, given that a graph 
is simple, one can learn the entire graph with a single linear query by querying the vector of powers of $2$ of the appropriate dimension.  One can extend this argument to show that 
in fact any problem where the set of input instances is finite can be solved with a single linear query.
This situation is unusual in the typically discrete world of query complexity, and few techniques have been developed that work to lower bound the linear query model.  
Two examples that we are aware of are a linear query lower bound on the complexity of computing the minimum cut of a graph using the \emph{cut dimension} method \cite{GPRW20}, and its extension 
to the ``$\ell_1$-approximate cut dimension" by \cite{LLSZ21}, 
and a linear query lower bound for a problem called the 
\emph{single element recovery problem} by \cite{ACK21}.  

We show that any zero-error randomized algorithm that correctly solves connectivity must make $\Omega(n)$ linear queries in expectation (\cref{cor:lin_low}).  
We do this by building on the $\ell_1$-approximate cut dimension approach.  The $\ell_1$-approximate cut dimension was originally applied to show lower bounds on the 
deterministic linear query complexity of the minimum cut problem.  We show that this method actually characterizes, up to an additive $+1$, 
the linear query \emph{certificate complexity} of the minimum cut problem.  Certificate complexity is a well-known lower bound technique for query complexity.
The certificate complexity of a function $f$ on input $x$ is the minimum number of queries $q_1, \ldots, q_k$ 
needed such that any input $y$ which agrees with $x$ on these queries $q_1(x) = q_1(y), \ldots, q_k(x) = q_k(y)$ must also satisfy $f(x) = f(y)$.  The maximum certificate complexity over all inputs $x$ is 
a lower bound on the query complexity of deterministic and even zero-error randomized algorithms for $f$.  

In our context, the linear query certificate complexity of \emph{connectivity} on a graph $G = (V,w)$ is the minimum $k$ for which there are 
queries $q_1, \ldots, q_k$ such that 
for any graph $G' = (V,w')$, if $\braket{w}{q_i} = \braket{w'}{q_i}$ for all $i=1, \ldots, k$ then $G'$ is connected iff $G$ is.  
We adapt the $\ell_1$-approximate cut dimension approach to give a linear algebraic characterization of the linear query certificate complexity of connectivity.
We then show that the linear query certificate complexity for 
connectivity of the simple $n$-vertex cycle is $\Omega(n)$, giving the $\Omega(n)$ linear query lower bound for zero-error algorithms solving connectivity.

\section{Preliminaries}
For a natural number $n$ we let $[n] = \{1, \ldots, n\}$.
We represent an undirected weighted graph as a pair $G = (V,w)$, where $V$ is the set of vertices,
the set of edge slots $V^{(2)}$ is the
set of subsets of $V$ with cardinality 2, and the weight function
$w : V^{(2)} \rightarrow \R $ is non-negative. The set of edges of $G$
is defined as $E(G) = \{e \in V^{(2)} : w(e) > 0\}$.  When all edges of $G$ have weight $1$ we use the notation $G = (V,E)$, where $E \subseteq V^{(2)}$ is the 
set of edges of $G$.

\section{Basic spanning forest algorithm}
In this section, we give a template spanning forest algorithm based on Bor\r{u}vka's algorithm \cite{NMN01}.  In Bor\r{u}vka's algorithm on input a graph $G = (V,w)$, one maintains the invariant of having 
a partition $\{S_1, \ldots, S_k\}$ of $V$ and a spanning tree for each $S_i$.  At the start of the algorithm this partition is simply $V$ itself.  In a generic round of the algorithm, the goal 
is to find one outgoing edge from each $S_i$ which has one.  One then selects a subset $H$ of these edges which does not create a cycle among $S_1, \ldots, S_k$, and 
merges sets connected by edges from $H$ while updating the spanning trees for each set accordingly.  If $q$ of the $k$ sets have an outgoing edge, then the cycle free 
subset $H$ will satisfy $|H| \ge q/2$.  Every edge from $H$ decreases the number of sets by at least $1$, so the number of sets after this round is at most $k - q/2$.  
In this way we see that $k$ minus the number of connected components of $G$ at least halves with every round, and the algorithm terminates with a spanning forest of $G$ after $O(\log n)$ rounds.

The main work of a round of this algorithm is the problem of finding an outgoing edge from each $S_i$.  We abstract out performing this task 
by a primitive called $\Recover$.  Given oracle access to a non-negative matrix $A \in \R^{n \times n}$, a set of rows $R$, a set of columns $S$, and an error parameter $\delta$, with 
probability at least $1-\delta$ $\Recover$ does the following: for every $i \in R$ for which there is a $j \in S$ with $A(i,j) > 0$ it outputs a pair $(i,j')$ with $A(i,j') > 0$. 
We record the input/output behavior of $\Recover$ in this code header.
\begin{algorithm}[H]
\caption{$\Recover[A](R,S, \delta)$}
\label{alg:mindeg}
 \hspace*{\algorithmicindent} \textbf{Input:} Oracle access to a non-negative matrix $A \in \R^{m \times n}$, subsets $R \subseteq [m], S \subseteq [n]$, and an error parameter $\delta$. \\
 \hspace*{\algorithmicindent} \textbf{Output:} With probability at least $1-\delta$ output a set $\Tcal \subseteq R \times S$ such that for every $i \in R$ for which $\exists j \in S$ with $A(i,j) > 0$ there is a 
 pair $(i,t) \in \Tcal$ with $A(i,t) > 0$.
\end{algorithm}

Assadi, Chakrabarty, and Khanna \cite{ACK21} study a similar primitive for connectivity called \emph{single element recovery}.  In this problem one is given a non-zero non-negative vector $x \in \R^N$ and the 
goal is to find a coordinate $j$ with $x_j > 0$.  $\Recover$ is exactly the matrix version of this problem, where one wants to solve the single element recovery problem on every row of a matrix.  This version is 
more suitable for the matrix-vector multiplication query algorithms we study here where we want to implement a round of Bor\r{u}vka's algorithm in a parallel fashion.

In the subsequent sections we will see how $\Recover$ can be implemented in various global models.  First we analyze how many calls to $\Recover$ are needed to find a spanning forest.
\begin{algorithm}
\caption{FindSpanningForest}
\label{alg:generic_spanning}
 \hspace*{\algorithmicindent} \textbf{Input:} Subroutine RecoverOneFromAll$[A](R,S)$, where $A$ is the adjacency matrix of a weighted graph $G=(V,w)$. \\
 \hspace*{\algorithmicindent} \textbf{Output:} Spanning forest $F$ of $G$.
\begin{algorithmic}[1]
\State $\Bcal \gets \emptyset, F \gets \emptyset, \delta \gets  (300(\log(n)+10))^{-1}, T = 3 (\log(n) + 10)$
\For{$v \in V$}
  \State $\Bcal \gets \Bcal \cup \makeset(v)$ \Comment{Set up Union-Find data structure}
\EndFor
\For{$i = 1$ to $T$}
  \State $R \gets \emptyset, S \gets \emptyset$
  \For{$v \in \Bcal$}
      \State Flip a fair coin.  If heads $R \gets R \cup \elts(v)$, else $S \gets S \cup \elts(v)$.
   \EndFor 
   \State $\roundedges \gets \recover[A](R,S, \delta)$ \Comment{Note $R \cap S = \emptyset$}
  \State $\edgestoadd \gets \emptyset$ \Comment{$\edgestoadd$ will hold one outgoing edge from each set}
  \For{$v \in \Bcal$}  
    \State $\outgoing \gets \{e \in \roundedges: |e \cap \elts(v)| = 1\}$ 
    \If{$\outgoing \ne \emptyset$}
    	\State Add one edge from $\outgoing$ to $\edgestoadd$. 
    \EndIf
  \EndFor
  \State $F \gets F \cup \edgestoadd$
  \For{$\{u,v\} \in \edgestoadd$}
    \State $\union(u,v)$ \Comment{Merge sets connected by an edge of $\edgestoadd$}
  \EndFor
  \State $\Temp \gets \emptyset$
  \For{$v \in \Bcal$}
    \State $\Temp \gets \Temp \cup \findset(v)$ \Comment{Create updated partition}
  \EndFor   
  \State $\Bcal \gets \Temp$
\EndFor
\State Return $F$
\end{algorithmic}
\end{algorithm}

\begin{theorem}
\label{thm:main}
Let $G = (V,w)$ be a weighted graph with $n$ vertices, and $A$ its adjacency matrix.  There is a randomized algorithm that finds a spanning forest of $G$ with probability at least $49/50$ after making $3 (\log(n) + 10)$ calls to 
$\Recover[A](R,S,\delta)$ with error parameter $\delta = (300(\log(n)+10))^{-1}$.  Moreover, all calls to $\Recover$ involve sets $R,S \subseteq V$ that are disjoint.
\end{theorem}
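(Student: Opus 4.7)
The plan is to prove disjointness (immediate) and the success bound (the substance). Disjointness follows because each round places every super-vertex $v \in \Bcal$ entirely into either $R$ or $S$ by a coin flip, and $\{\elts(v) : v \in \Bcal\}$ partitions $V$. For success, a union bound over the randomness inside Recover shows that all $T = 3(\log n + 10)$ invocations satisfy their specification with probability at least $1 - T\delta = 99/100$; I condition on this event. It then remains to show that, with probability at least $1 - 1/100$, the super-vertex partition at the end of the algorithm coincides with the connected-component partition of $G$, at which point a spanning forest can be extracted from $F$ by a query-free post-processing.

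The core is a per-round, per-component Bor\r{u}vka-style analysis. Fix a connected component $C$ of $G$, let $Q_t^C$ be the quotient multigraph on the current super-vertices whose elts lie in $C$, and write $s = |V(Q_t^C)|$. Call $v \in V(Q_t^C)$ \emph{$R$-lucky} if $v \in R$ and some quotient-neighbor of $v$ lies in $S$. If $s \ge 2$, every $v \in V(Q_t^C)$ has quotient-degree at least $1$, so $\Pr[v \text{ is $R$-lucky}] \ge \frac{1}{2} \cdot \frac{1}{2} = \frac{1}{4}$, giving $\E[L_t^C] \ge s/4$ where $L_t^C$ counts $R$-lucky super-vertices of $C$. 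For each $R$-lucky $v$, some $u \in \elts(v) \subseteq R$ has a neighbor in $S$, so by Recover's specification some pair $(u, j) \in \roundedges$ with $j \in S$ lands in $v$'s $\outgoing$ set, which causes $v$ to be unioned with an $S$-side super-vertex.

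Since every pair in $\roundedges$ crosses between the $R$ and $S$ sides, the subgraph of picks contributed by $R$-lucky super-vertices is bipartite with $R$-side degree exactly $1$, hence a disjoint union of stars; contracting along this subgraph reduces the super-vertex count of $C$ by exactly $L_t^C$, and picks from other super-vertices only help. Thus $s_{t+1}^C \le s_t^C - L_t^C$ and $\E[s_{t+1}^C - 1 \mid \text{state at time }t] \le \frac{3}{4}(s_t^C - 1)$ whenever $s_t^C \ge 2$ (the $s_t^C = 1$ case is trivial). Iterating and summing over components yields $\E\bigl[\sum_C (s_T^C - 1)\bigr] \le (3/4)^T n$, which for $T = 3(\log n + 10)$ is at most $(3/4)^{30} \cdot n^{\,1 + 3\log_2(3/4)} < 1/100$. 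Markov's inequality gives that the partition is wrong with probability at most $1/100$, and combining with the Recover union bound the total failure probability is at most $1/50$.

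The main obstacle I anticipate is the structural claim that each $R$-lucky super-vertex yields a distinct net merge. Recover returns at most one neighbor per $R$-side \emph{vertex} rather than per $R$-side \emph{super-vertex}, and the same picked edge can be selected by two different super-vertices' $\outgoing$ sets, so it is not automatic that the net contraction count equals $L_t^C$. The key observation that unlocks the argument is that every pair in $\roundedges$ crosses between the $R$ and $S$ sides; this forces the $R$-lucky picks to form a star forest ($R$-lucky leaves attached to $S$-side centers), whose contraction reduces super-vertex count by exactly the number of leaves.
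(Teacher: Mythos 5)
Your proof is correct and follows essentially the same Bor\r{u}vka-style route as the paper: random red/blue coloring of the supervertices, each nontrivial supervertex is ``lucky'' with probability at least $1/4$, the expected potential decays by a factor $3/4$ each round, and then Markov plus a union bound over the $\Recover$ errors finish the argument. Your per-component potential $\sum_C (s_t^C - 1)$ and your explicit star-forest accounting of the merge count are just slightly more granular bookkeeping than the paper's global count $Q_t$ of supervertices with outgoing edges; the two analyses coincide in substance.
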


\begin{proof}
The algorithm is given in \cref{alg:generic_spanning}.  We use a Union-Find data structure to maintain a partition of the vertices into sets 
which are connected.  This data structure has the operations $\makeset$ to create a set of the partition, $\findset$ which takes as argument 
a vertex and returns a representative of the set containing the vertex, and $\union$ which takes as arguments two vertices and merges their corresponding 
sets.  We also suppose the data structure supports the operation $\elts$ which, given a vertex $v \in V$, returns all the elements in the same set of the partition as $v$.

The algorithm makes one call to $\Recover$ in each iteration of the for loop.  There are $T =3 (\log(n) + 10)$ iterations of the for loop, giving the upper bound on the complexity.  Let us now show correctness.
We will first show correctness assuming the procedure $\Recover$ has no error.

In an arbitrary round of the algorithm we have a partition $B_1, \ldots, B_k$ of $V$ and a set of edges $F$ which consists of a spanning tree for each $B_i$.
For $i = 1, \ldots, k$ we color $B_i$ red or blue independently at random with equal probability.  All vertices in $B_i$ are given the color of $B_i$.
Let $R$ be the set of red vertices and $S$ the set of blue vertices.  We then call $\Recover[A](R,S,\delta)$.  Note that $R \cap S = \emptyset$, 
showing the ``moreover'' part of the theorem.  Assuming that $\Recover$ returns without error, this gives us an edge in the red-blue cut 
for every red vertex which has such an edge.  The set of all these edges is called $\roundedges$.  We then select from $\roundedges$ one edge incident to each red $B_i$ which has one, 
and let the set of these edges be $\edgestoadd$.  
This set of edges necessarily does not create a cycle among the sets $B_1, \ldots, B_k$ as we have chosen at most one edge from only the red $B_i$.  
We add $\edgestoadd$ to $F$ and merge the sets connected by these edges giving a new partition $B_1', \ldots, B_{k'}'$.  Assuming that $\Recover$ returns 
without error, so that all added edges are valid edges, and as we have added a cycle free set of edges, we maintain the invariant that $F$ contains a spanning 
tree for each set of the partition.  

To show correctness it remains to show that with constant probability at the end of the algorithm the partition $B_1, \ldots, B_k$ consists of connected components.
The key to this is to analyze how the number of sets in the partition decreases 
with each round.  Let $Q_i$ be a random variable denoting the number of sets of the partition that have an outgoing edge at the start of round $i$.  
Thus, for example $Q_1$ is equal to $n$ minus the number of isolated vertices with probability $1$.  In particular $\E[Q_1] \le n$.

Consider a generic round $i$ and say that at the start of this round $q_i$ many sets of the partition have an outgoing edge.  
In expectation $q_i/2$ of the sets with an outgoing edge are colored red, and each outgoing edge has probability $1/2$ of being in the red-blue cut.
Thus the expected size of $\edgestoadd$ is at least $q_i/4$.  As each edge in $\edgestoadd$ reduces the number of sets in the partition of the next round by at least one this means 
$\E[Q_{i+1} \mid Q_i = q_i] \le 3q_i/4$.  Therefore we have 
\begin{align*}
\E[Q_{i+1}] = \sum_{q_i} \E[Q_{i+1} \mid Q_i = q_i] \Pr[Q_i = q_i] &\le \frac{3}{4} \sum_{q_i} q_i  \Pr[Q_i = q_i] \\
&= \frac{3}{4} \E[Q_i] \enspace .
\end{align*}
From this it follows by induction that $\E[Q_i] \le (3/4)^{i-1} n$.  Taking $T = 3 (\log(n) + 10)$ we have $\E[Q_T] \le 1/100$, and therefore by Markov's inequality the algorithm will terminate 
with all sets being connected components except with probability at most $1/100$.

Finally, let us remove the assumption that $\Recover$ returns without error.  We have set the error parameter in $\Recover$ to be $\delta = (300(\log(n)+10))^{-1}$.  
As $\Recover$ is called $T$ times, by a union bound the probability it ever makes an error is at most $T \delta \le 1/100$.  Therefore adding this to our error bound, 
the algorithm will be correct with probability at least $49/50$.
\end{proof}

\section{Master Model}
Given \cref{alg:generic_spanning}, the task now becomes to implement $\Recover[A](R,S,\delta)$.  There is a natural approach to do this 
following the template of $\ell_0$ samplers \cite{JST11}.  Let $B = A(R,S)$ be the submatrix of interest, and assume for the moment that $A \in \{0,1\}^{n \times n}$ is Boolean.
\begin{enumerate}
  \item Estimate the number of ones in each row of $B$.
  \item Bucket together the rows whose estimate is in the range $(2^{i-1}, 2^i]$ into a set $G_i$.
  \item For each $G_i$, randomly sample $\Theta(n \log(n)/2^i)$ columns of $B$.  With high probability every row of $G_i$ will have at least one and at most $c \cdot \log(n)$ ones 
  in this sample, for a constant $c$.  
  \item For each $i$ and row in $G_i$ learn the $O(\log(n))$ ones in the sampled set.
\end{enumerate}

We now want to find a ``master query model'' that can efficiently implement this template, yet is sufficiently weak that simulations of this model by other global 
query models give non-trivial results.  To do this we take inspiration from combinatorial group testing.  In combinatorial group testing one is given \emph{OR query} access to a string $x \in \{0,1\}^n$.  
This means that one can query any subset $S \subseteq [n]$ and receive the answer $\vee_{i \in S} \; x_i$.  Both the problems of estimating the number of ones in $x$ we need for 
step~1 and learning a sparse string $x$ we need for step~4 have been extensively studied in the group testing setting.  A natural kind of query to implement group testing algorithms in the 
matrix setting is a Boolean matrix-vector multiplication query, that is for $A \in \{0,1\}^{n \times n}$ one can query $x \in \{0,1\}^n$ and receive the answer $y = A \vee x \in \{0,1\}^n$ 
where $y_i = \vee_j (A(i,j) \wedge x_j)$.  With this kind of query one can implement a \emph{non-adaptive} group testing algorithm in parallel on all rows of $A$.

We will define the master model to be a weakening of the Boolean matrix-vector multiplication query, in order to accomodate quantum cut queries later on.  In the master model one can query $x \in \{0,1\}^n$ 
and receive the answer $(A \vee x) \circ (1-x)$.  Here $\circ$ denotes the Hadamard or entrywise product of vectors.  In other words, in the 
master model one only learns the entries of $A \vee x$ in those coordinates where $x$ is zero.  This restriction allows one to efficiently simulate these queries 
by a quantum algorithm making cut queries.

If $A$ is non-Boolean then in the master model one can again query any $x \in \{0,1\}^n$ and receive the answer $[Ax]_{> 0} \circ (1-x)$, 
where $[y]_{> 0} \in \01^n$ is a Boolean vector whose $i^{th}$ entry is $1$ if $y_i > 0$ and $0$ otherwise.  The following proposition shows that 
it suffices to restrict our attention to Boolean matrices.

\begin{proposition}
\label{prop:boolean}
Let $A \in \R^{n \times n}$ be a non-negative matrix and $x \in \01^n$.  Then $[Ax]_{> 0} = [A]_{> 0} \vee x$.  
\end{proposition}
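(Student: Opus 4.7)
The plan is to verify the identity entrywise. Fix an arbitrary row index $i \in [n]$; it suffices to show that the $i$-th entry of $[Ax]_{>0}$ equals the $i$-th entry of $[A]_{>0} \vee x$. Expanding the left-hand side, the $i$-th entry is $1$ iff $\sum_j A(i,j)\, x_j > 0$. Expanding the right-hand side, the $i$-th entry is $1$ iff $\bigvee_j \bigl([A]_{>0}(i,j) \wedge x_j\bigr) = 1$, i.e.\ iff there exists some $j$ with $A(i,j) > 0$ and $x_j = 1$.

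The key observation is that non-negativity of $A$ (and of $x$, which is Boolean hence non-negative) forbids cancellation in the sum $\sum_j A(i,j)\, x_j$. Every summand is non-negative, so the sum is strictly positive iff at least one summand is strictly positive. Since $x_j \in \{0,1\}$, the summand $A(i,j)\, x_j$ is strictly positive iff $A(i,j) > 0$ and $x_j = 1$, which is exactly the condition $[A]_{>0}(i,j) \wedge x_j = 1$. Stringing these equivalences together gives the desired equality coordinatewise.

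There is no real obstacle here: the content of the statement is precisely that a non-negative linear combination is positive iff an OR of the contributing Boolean indicators is true. The proof is a one-line unwinding of definitions, and the only thing to flag explicitly is where non-negativity of $A$ is used, namely to rule out cancellation between positive and negative contributions in the sum.
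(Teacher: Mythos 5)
Your proof is correct and is the natural entrywise verification; the paper omits a proof of this proposition entirely, treating it as immediate, so there is nothing to compare against. You correctly identify non-negativity of $A$ as the crucial hypothesis ruling out cancellation in $\sum_j A(i,j)\,x_j$.
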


\begin{algorithm}
\caption{Implementation of $\Recover$ in the master model.}
\label{alg:generic_spanning}
 \hspace*{\algorithmicindent} \textbf{Input:} $(A \vee x) \circ (1-x)$ query access to a Boolean matrix $A \in \01^{n \times n}$, disjoint subsets $R,S \subseteq [n]$, 
 and an error parameter $\delta$ \\
 \hspace*{\algorithmicindent} \textbf{Output:} A set $\Tcal \subseteq R \times S$ such that for every $i \in R$ if there is a $j \in S$ with $A(i,j) =1$ then 
 there is a pair $(i,t) \in \Tcal$ with $A(i,t) = 1$, except with error probability at most $\delta$.
\begin{algorithmic}[1]
\State Estimate $|A(i,S)|$ for all $i \in R$ using \cref{lem:counting}.
\For{$i=1, \ldots, \ceil{\log n}$}
     \State Let $G_i$ be the set of rows of $R$ whose estimate is in the range $(2^{i-1}, 2^i]$
     \State Randomly sample $\min\{n, \ceil{32 |S| \ln(n)/2^i}\}$ elements from $S$ with replacement, and let the selected set be $H_i$
     \State Learn the submatrix $A(G_i, H_i)$
\EndFor
\end{algorithmic}
\end{algorithm}

The main result of this section is the following.
\begin{lemma}
\label{lem:recover}
Let $A \in \R^{n \times n}$ be a non-negative matrix.  $\Recover[A](R,S,1/n^2)$ on disjoint subsets $R,S \subseteq [n]$ can be implemented with
$O(\log(n)^3)$ many $[Av]_{>0} \circ (1-v)$ queries.
\end{lemma}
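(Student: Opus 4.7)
Plan: By Proposition~\ref{prop:boolean}, the master-model response $[Ax]_{>0}\circ(1-x)$ depends on $A$ only through its support, so I may assume $A\in\01^{n\times n}$ is Boolean and each query takes the form $(A\vee x)\circ(1-x)$. Moreover, since $R\cap S=\emptyset$, for any query vector $v$ supported in $S$ we have $v_r=0$ for every $r\in R$, and the master response at row $r$ is precisely the OR of $A(r,\cdot)$ over $\supp(v)\subseteq S$. Thus, on the rows we care about, the master oracle behaves as a parallel OR-query oracle on subsets of $S$.

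I implement the three-stage algorithm sketched in the pseudocode above. First, the counting primitive from Lemma~\ref{lem:counting} yields, for every $i\in R$, a constant-factor approximation $\tilde k_i$ of $k_i:=|A(i,\cdot)\cap S|$ using $O(\log^2 n)$ master queries, with failure probability at most $1/(3n^2)$. Bucketing rows as $G_t=\{i\in R:\tilde k_i\in(2^{t-1},2^t]\}$ for $t=1,\ldots,\lceil\log n\rceil$ then gives $k_i=\Theta(2^t)$ for every $i\in G_t$. Second, for each $t$ draw $H_t$ by sampling $\lceil 32|S|\ln n/2^t\rceil$ elements of $S$ with replacement; a Chernoff plus union bound argument shows that, except with probability $1/(3n^2)$, every $i\in G_t$ satisfies $1\le |A(i,\cdot)\cap H_t|\le c\log n$ for an absolute constant $c$.

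Third, for each scale $t$ I recover a 1 per row of $A(G_t,H_t)$ using $O(\log^2 n)$ master queries via an $\ell_0$-sampler pattern. Repeat $\Theta(\log n)$ times: subsample $T\subseteq H_t$ at rate $1/(c\log n)$ and, for every bit position $b\in\{1,\ldots,\lceil\log n\rceil\}$ and $\beta\in\{0,1\}$, issue the query $\{j\in T:\text{the $b$-th bit of $j$ equals }\beta\}$. For $i\in G_t$ the expected number of 1's of $A(i,\cdot)$ lying in $T$ is $\Theta(1)$, so with constant probability there is exactly one such 1; when this happens the responses at row $i$ read off the binary expansion of its unique column, and ``exactly one'' is certified by checking that no bit $b$ has both complementary responses equal to $1$. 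With $\Theta(\log n)$ repetitions the per-row failure probability drops below $1/n^3$, so a union bound bounds recovery failure by $1/(3n^2)$. Summing across the three stages yields $O(\log^2 n)+O(\log n)\cdot O(\log^2 n)=O(\log^3 n)$ master queries and total error at most $1/n^2$.

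The main obstacle is that the master oracle returns only OR bits, never counts, so stage three cannot directly tell whether a subsample contains a single 1; the paired per-bit complementary queries supply the needed uniqueness certificate. The remaining work is a union bound over the three independent failure events, made clean by the invariant that every query used throughout is supported on $S$ and hence disjoint from $R$ as required for the master response to be informative on rows of $R$.
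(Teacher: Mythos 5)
Your stages 1 and 2 (counting via \cref{lem:counting}, bucketing, then sampling $H_t$ with replacement and invoking a Chernoff bound) follow the paper's proof essentially line by line. Where you diverge is stage 3: the paper invokes the non-adaptive randomized group testing theorem of \cref{thm:cgt} as a black box, which learns a $d$-sparse string with $O(d(\log n + \log(1/\delta)))$ OR queries and only needs an \emph{upper} bound on the sparsity. You instead build an $\ell_0$-sampler from scratch---subsample $H_t$ at a fixed rate $1/(c\log n)$, issue paired bit-block queries, and certify uniqueness by the complementary-response check. That is a legitimate and more self-contained route for the same step, and the uniqueness certificate observation is nice.

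But as written there is a real gap in stage 3. From stage 2 you establish only the two-sided bound $1 \le |A(i,\cdot)\cap H_t| \le c\log n$, which is exactly what the statement of \cref{lem:sample} gives. You then assert that subsampling at the \emph{fixed} rate $1/(c\log n)$ makes the expected number of surviving ones $\Theta(1)$. That is false at the bottom of the range: if $|A(i,\cdot)\cap H_t| = 1$, the expected survivor count is $1/(c\log n)$, and the ``exactly one'' event has probability only about $1/(c\log n)$ per repetition. After $\Theta(\log n)$ repetitions the per-row failure probability is still $(1-1/(c\log n))^{\Theta(\log n)} = \Theta(1)$, nowhere near $1/n^3$; you would need $\Theta(\log^2 n)$ repetitions, blowing the total up to $O(\log^4 n)$. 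The gap can be closed by strengthening stage 2: because the expected number of ones of row $i$ landing in $H_t$ is $\Theta(\log n)$ (row $i$ has $\Theta(2^t)$ ones in $S$ and $H_t$ samples a $\Theta(\log n/2^t)$ fraction), a Chernoff bound also gives a matching \emph{lower} bound $|A(i,\cdot)\cap H_t| = \Omega(\log n)$ with failure probability $n^{-\Omega(1)}$. That lower bound is not in the statement of \cref{lem:sample}, so you must prove it explicitly; once you do, the fixed-rate subsampler is justified and the $O(\log^3 n)$ count stands. (Alternatively, sweep the subsampling rate over $O(\log\log n)$ geometric scales, at the cost of an extra $\log\log n$ factor.) The paper's use of \cref{thm:cgt} sidesteps this entirely, since group testing needs only the upper bound on sparsity.
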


At a high level, this lemma will follow the 4 steps given at the beginning of the section.  Step~2 does not require any queries.
Step~3 follows easily by a Chernoff bound, as encapsulated in the following lemma.
\begin{lemma}[{\cite[Lemma 8]{LSZ21}}] \label{lem:sample}
Let $x^{(1)}, \ldots, x^{(k)} \in \{ 0,1\}^\ell$ be such that $\frac{t}{8} \leq |x^{(i)}| \leq 2t$ for all $i \in \{1, \ldots, k\}$. For $\delta > 0$, sample with replacement $\frac{8\ell \ln(k/\delta)}{t}$ elements of $\{1, \ldots, n\}$, and call the resulting set $R$. Then
\begin{itemize}
    \item $\Pr_R[\exists i \in \{1, \ldots, k\}: |x^{i}(R)| = 0]\leq \delta$
    \item $\Pr_R[\exists i \in \{1, \ldots, k\}: |x^{i}(R)| > 64 \ln{(k/\delta)}]\leq \delta$
\end{itemize}
\end{lemma}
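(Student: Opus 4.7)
The plan is to prove each of the two bullets by fixing an index $i$, bounding the corresponding ``bad'' event for that $i$ by $\delta/k$, and then taking a union bound over the $k$ vectors. (I read the sample space as $\{1,\ldots,\ell\}$, since the vectors lie in $\{0,1\}^\ell$; the ``$n$'' in the statement appears to be a typo.) First I would set $m := \lceil 8\ell\ln(k/\delta)/t\rceil$ and, for a fixed $i$, write the number of samples $X_i$ landing in $\supp(x^{(i)})$ as a sum of $m$ independent Bernoulli trials, each with success probability $p_i = |x^{(i)}|/\ell$. The hypothesis $t/8 \le |x^{(i)}| \le 2t$ pins the mean into the range
\[
\ln(k/\delta) \;\le\; \E[X_i] \;=\; mp_i \;\le\; 16\ln(k/\delta).
\]

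For the first bullet, the event $|x^{(i)}(R)| = 0$ is exactly $\{X_i = 0\}$, and since $X_i$ is binomial,
\[
\Pr[X_i = 0] \;=\; (1-p_i)^m \;\le\; e^{-mp_i} \;\le\; e^{-\ln(k/\delta)} \;=\; \delta/k,
\]
so a union bound over the $k$ indices yields the claimed bound $\delta$. For the second bullet, I would invoke the multiplicative Chernoff upper tail in the form $\Pr[X_i \ge R] \le (e\,\E[X_i]/R)^{R}$, valid for $R \ge \E[X_i]$ and monotone increasing in the mean. With $R = 64\ln(k/\delta)$ and the worst-case bound $\E[X_i] \le 16\ln(k/\delta)$, this gives
\[
\Pr[X_i > 64\ln(k/\delta)] \;\le\; (e/4)^{64\ln(k/\delta)} \;\le\; \delta/k,
\]
and another union bound over $i$ finishes the second bullet.

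The only real obstacle is calibrating the constants in the Chernoff tail, i.e.\ verifying that the threshold $64\ln(k/\delta)$ sits far enough above the worst-case mean $16\ln(k/\delta)$ to drive the per-$i$ tail below $\delta/k$. Since $\ln(4/e) = \ln 4 - 1 > 1/64$, the exponent $64\ln(4/e)\ln(k/\delta)$ exceeds $\ln(k/\delta)$, hence $(e/4)^{64\ln(k/\delta)} \le e^{-\ln(k/\delta)} = \delta/k$. This is entirely routine once the correct form of Chernoff is chosen; the only other thing to note is that the cases $\delta \ge 1$ or $k/\delta < e$ make both bullets vacuous, so we may freely assume $\delta/k \le 1$ when applying the exponential bound.
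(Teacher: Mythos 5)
The paper does not include its own proof of this lemma; it is imported verbatim as \cite[Lemma~8]{LSZ21}, so there is nothing in this source to compare against line by line. Your argument is a correct, self-contained proof of the standard Chernoff/union-bound flavor one would expect (and almost certainly what LSZ21 do). A few small remarks. Your reading of ``$n$'' as a typo for ``$\ell$'' is the right one. You are right to observe that $|x^{(i)}(R)|$, the number of \emph{distinct} sampled coordinates in $\supp(x^{(i)})$, is dominated by the binomial count $X_i$ of samples (with multiplicity) that land in $\supp(x^{(i)})$, so bounding $X_i$ suffices for the upper tail, and $\{|x^{(i)}(R)|=0\}=\{X_i=0\}$ for the lower tail. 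The monotonicity you invoke for $(e\mu/a)^a$ in $\mu$ is valid and is cleanest to justify by stochastic domination: raise each Bernoulli's success probability to push the mean up to $16\ln(k/\delta)$, which only increases $X_i$, then apply the tail bound at that larger mean. The only loose end is that you should take $m=\lceil 8\ell\ln(k/\delta)/t\rceil$ and note the ceiling can add at most $2t/\ell\le 16$ to the mean; this is harmless since there is ample slack in the $(e/4)^{64\ln(k/\delta)}$ estimate, but it is worth a one-line remark. The edge-case discussion ($\delta\ge 1$ or $k/\delta<e$) is appropriate.
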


The interesting part are steps~1 and~4, both of which will be done using non-adaptive group testing algorithms.  Towards step~1 we make the following definition.
\begin{definition}[Good estimate]
\label{def:good}
We say that $b \in \R^k$ is a good estimate of $c \in \R^k$ iff $b(i) \le c(i) \le 2 b(i)$ for all $i \in \{1, \ldots, k\}$.  
\end{definition}

The next theorem gives a randomized non-adaptive group testing algorithm for estimating the number of ones in a string $x$.  
\begin{theorem}[{\cite[Theorem 4]{Bshouty19}}]
\label{lem:counting}
Let $n$ be a positive integer and $0 < \delta < 1$ an error parameter.  There is a non-adaptive randomized algorithm that on input 
$x \in \{0,1\}^n$ outputs a good estimate of $|x|$ with probability at least $1-\delta$ after $O(\log(1/\delta)\log(n))$ many OR queries to $x$.
\end{theorem}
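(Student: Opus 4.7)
The plan is to implement a non-adaptive OR-query scheme that, at each geometrically spaced density $p_i = 2^{-i}$, samples $K=\Theta(\log(1/\delta))$ independent random subsets of $[n]$, and then reads off a factor-$2$ approximation of $|x|$ from the pattern of responses. First I would dispose of $|x|=0$ trivially with a single OR query on all of $[n]$ (which returns $0$ iff $|x|=0$); henceforth assume $|x|\ge 1$. For each $i\in\{0,1,\ldots,\lceil\log_2 n\rceil\}$, let $S_{i,1},\ldots,S_{i,K}$ be independent random subsets of $[n]$ where each element is included independently with probability $2^{-i}$, and query $q_{i,j}=\bigvee_{k\in S_{i,j}} x_k$. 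This amounts to $O(\log(1/\delta)\log n)$ OR queries in total and is non-adaptive.

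The core calculation is $\Pr[q_{i,j}=1] = 1-(1-2^{-i})^{|x|}$, a strictly monotone function of $|x|/2^i$ with a sharp transition around $|x|\approx 2^i$. A standard Chernoff bound shows that the empirical fraction $\hat p_i$ of $j$'s with $q_{i,j}=1$ at scale $i$ is within any desired additive constant $\varepsilon$ of the true probability, with per-scale failure probability $\delta/(\lceil\log_2 n\rceil+1)$ once $K$ is chosen large enough. A union bound over all $O(\log n)$ scales keeps the total failure at most $\delta$.

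Under this concentration event I would first locate the largest scale $i^\star$ at which $\hat p_{i^\star}$ exceeds a fixed threshold (for instance $1-e^{-1/2}$); monotonicity then immediately pins $2^{i^\star}$ within a constant factor of $|x|$. To sharpen to a factor-$2$ estimate, invert the identity $\hat p_{i^\star}\approx 1-(1-2^{-i^\star})^{|x|}$ for $|x|$ using the empirical fraction $\hat p_{i^\star}$ at this calibrated scale; since the inverse map $p\mapsto \ln(1-p)/\ln(1-2^{-i^\star})$ has bounded sensitivity in the transition regime $2^{i^\star}\approx |x|$, choosing $\varepsilon$ a sufficiently small absolute constant yields an output $b$ with $b\le |x|\le 2b$, which is a good estimate in the sense of \cref{def:good}.

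The main technical obstacle is matching the stated query bound $O(\log(1/\delta)\log n)$ exactly rather than the slightly weaker $O(\log n\cdot(\log\log n+\log(1/\delta)))$ produced by the naive union bound above. This can be fixed either by first locating $i^\star$ from only a handful of ``crude'' scales with small $K$ and then concentrating samples at $O(1)$ scales near $i^\star$, or by the randomness-sharing construction in Bshouty's original proof (sampling all scales from prefixes of a single random permutation of $[n]$, so that Chernoff failure events at different scales are correlated and need not be union-bounded independently). For the applications in this paper $\delta$ is polynomially small in $n$, so the two bounds coincide asymptotically and the weaker form already suffices.
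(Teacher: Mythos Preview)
The paper does not prove this statement at all: it is quoted as \cite[Theorem~4]{Bshouty19} and used as a black box inside the proof of \cref{lem:recover}. There is therefore no ``paper's own proof'' to compare against.

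On the merits of your sketch, the geometric-scales-plus-Chernoff argument is the right idea and your identification of the $\log\log n$ slack from the naive union bound is accurate. One concrete issue: your first proposed fix, ``locate $i^\star$ from crude scales with small $K$ and then concentrate samples at $O(1)$ scales near $i^\star$,'' is an \emph{adaptive} strategy (the later queries depend on the answers to the earlier ones), so it does not establish the theorem as stated, which demands a non-adaptive algorithm. Your second fix---correlating the randomness across scales via prefixes of a single random permutation, as in Bshouty's original argument---is the one that actually removes the $\log\log n$ factor non-adaptively. Your final remark is correct and is in fact all that matters here: in the only place the paper invokes this theorem (\cref{lem:recover}), $\delta = 1/n^4$, so $\log(1/\delta)=\Theta(\log n)$ dominates $\log\log n$ and your weaker bound already yields the claimed $O(\log^2 n)$ query count there.
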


The problem of learning a string $x$ by means of OR queries is the central problem of combinatorial group testing and has been extensively studied \cite{DH99}.
It is known that non-adaptive \emph{deterministic} group testing algorithms must make $\Omega\left(\frac{d^2 \log(n)}{\log(d)}\right)$  queries in order 
to learn a string with at most $d$ ones \cite{DR82,Rus94}.  However, there are non-adaptive \emph{randomized} group testing algorithms that learn 
any $x \in \{0,1\}^n$ with $|x| \le d$ with probability at least $1-\delta$ after $O(d(\log(n) + \log(1/\delta)))$ queries.  This is what we will use.

\begin{theorem}[{\cite[Theorem 2]{BDKS17}}]
\label{thm:cgt}
Let $d,n$ be positive integers with $d \le n$, and $0 < \delta < 1$.  There is a distribution $\Dcal$ over $n$-by-$k$ Boolean matrices for $k = O(d(\log(n) + \log(1/\delta)))$ such that 
for any $x \in \01^n$ with $|x| \le d$, the string $x$ can be recovered from $[x^T R]_{>0}$ with probability at least $1-\delta$ when 
$R$ is chosen according to $\Dcal$.
\end{theorem}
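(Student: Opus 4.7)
My plan is to prove the theorem by a standard random construction, letting the entries of $R$ be independent Bernoullis with an appropriately tuned bias. Set $p = 1/(d+1)$ and $k = \ceil{2e(d+1)(\ln n + \ln(1/\delta))}$, and take $\Dcal$ to be the distribution on $\01^{n \times k}$ under which every entry of $R$ is independently $1$ with probability $p$. This $k$ is $O(d(\log n + \log(1/\delta)))$, and $\Dcal$ depends only on $n$, $d$, and $\delta$, so the same distribution works for every $x$ with $|x| \le d$. The decoder, given the query answer $y = [x^T R]_{>0} \in \01^k$, outputs $\hat x$ defined by $\hat x_i = 1$ iff $y_c = 1$ for every column $c$ with $R_{i,c} = 1$.

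Correctness splits in two. If $i \in \supp(x)$, then for every column $c$ with $R_{i,c} = 1$ we automatically have $(x^T R)_c \ge x_i R_{i,c} = 1$, so $y_c = 1$ for every such $c$, and hence $\hat x_i = 1 = x_i$; this direction uses no randomness at all. The nontrivial direction is showing that, with probability at least $1 - \delta$ over $R \sim \Dcal$, the decoder also gets every $i \in [n] \setminus \supp(x)$ right. I will reduce this to the event that, for every such $i$, some column $c$ is a \emph{witness} in the sense that $R_{i,c} = 1$ while $R_{j,c} = 0$ for all $j \in \supp(x)$; such a column forces $y_c = 0$ while certifying $R_{i,c} = 1$, which in turn forces $\hat x_i = 0$.

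The heart of the argument is then a direct probabilistic estimate. Fix $i \notin \supp(x)$. By independence of entries, a single column $c$ is a witness for $i$ with probability exactly $p(1-p)^{|\supp(x)|} \ge p(1-p)^d \ge 1/(e(d+1))$, using the standard inequality $(1 - 1/(d+1))^d \ge 1/e$. The columns are independent, so the probability that none of the $k$ columns witnesses $i$ is at most $(1 - 1/(e(d+1)))^k \le \exp(-k/(e(d+1)))$. A union bound over the at most $n$ indices $i \in [n] \setminus \supp(x)$ bounds the total failure probability by $n\exp(-k/(e(d+1)))$, and the stated choice of $k$ makes this at most $\delta$.

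There is no serious obstacle here; the statement is a calibration of the classical disjunct-matrix construction. The only things to check carefully are that the bias $p = 1/(d+1)$ essentially maximizes $p(1-p)^d$ so the resulting leading constant in $k$ is benign, that the decoding rule refers only to $R$ and $y$ and never to $\supp(x)$ so one fixed $\Dcal$ really serves every unknown $x$ with $|x| \le d$, and that the asymptotic form $k = O(d(\log n + \log(1/\delta)))$ is not spoiled by the $O(1)$ slack in the exponential estimate used above.
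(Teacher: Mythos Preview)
The paper does not supply its own proof of this theorem; it is quoted verbatim from \cite{BDKS17} and used as a black box in the proof of \cref{lem:recover}. So there is nothing in the paper to compare your argument against.

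That said, your argument is correct and is exactly the standard randomized disjunct-matrix construction that underlies \cite{BDKS17}. The only point worth a remark is the edge case where row $i$ of $R$ is entirely zero: then the decoding rule vacuously sets $\hat x_i = 1$, which is wrong when $i \notin \supp(x)$. Your analysis already handles this, since ``no witness column for $i$'' subsumes the event that row $i$ is all zeros, and you have bounded the probability of that larger event. The constant $2$ in your choice of $k$ is generous (a factor $1$ already yields failure probability at most $\delta$), but of course this does not affect the asymptotic claim.
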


We are now ready to give the proof of \cref{lem:recover}

\begin{proof}[Proof of \cref{lem:recover}]
By \cref{prop:boolean} we may assume that $A$ is Boolean and we have $(A \vee x) \circ (1-x)$ query access to $A$.
Let $k = |R|, \ell = |S|$ and $B = A(R,S)$ be the submatrix of interest.  Let $b \in \mathbb{N}^{k}$ where $b(i)$ is the number of ones in the $\ith$ row of $B$.
By \cref{lem:counting} applied with $\delta = 1/n^4$ and a union bound, there is a distribution $\mathcal{R}_{\mathrm count}$ over Boolean $\ell$-by-$t$ matrices with $t = O(\log(n)^2)$ such that from $[BK]_{>0}$ we can 
recover a good estimate of $b$ with probability at least $1-1/n^3$ when $K$ is drawn from $\mathcal{R}_{\mathrm count}$.  We add $1/n^3$ to our error bound and continue the proof assuming 
that we have a good estimate of $b$. As $R$ and $S$ are disjoint we can simulate the computation 
of $[BK]_{>0}$ with $O(\log(n)^2)$ many $(A \vee x) \circ (1-x)$ queries.

Next, for $i = 1, \ldots, \ceil{\log n}$ we bucket together rows of $B$ where the estimated number of non-zero entries is in the interval $(2^{i-1}, 2^{i}]$ into a set $G_i$.  As the estimate is good, 
for every $j \in G_i$ the number of non-zero entries in the $\jth$ row of $B$ is actually in the interval $(2^{i-3},2^{i+1}]$.
For each $i = 1, \ldots, \ceil{\log n}$ randomly sample $\frac{32 \ell \ln(n)}{2^{i}}$ elements of $S$ with replacement and let $H_i$ be the resulting set. 
By \cref{lem:sample}, with probability at least $1-1/n^3$ there is at least 1 and at most $O(\log{(n)})$ ones in each row of the 
submatrix $A(G_i,H_i)$.  We add $\log(n)/n^3$ to our error total and assume this is the case for all $i = 1, \ldots, \ceil{\log n}$.

For each $i = 1, \ldots, \ceil{\log n}$ we next learn the non-zero entries of $A(G_i,H_i)$ via \cref{thm:cgt}.  This theorem states that
for $t = O(\log(n)^2)$ there is a family of $|H_i|$-by-$t$ matrices $\Dcal$ such that from $[A(G_i,H_i) D]_{>0}$ we can learn the positions of all the ones of $A(G_i,H_i)$ 
with probability at least $1-\frac{1}{n^3}$, when $D$ is chosen from $\Dcal$.  Again by a union bound this computation will be successful for all $i$ except with probability $\log(n)/n^3$.  
As $R$ and $S$ are disjoint we can simulate the computation of $[A(G_i,H_i) D]_{>0}$ with $O(\log^2 n)$ many $(A \vee x) \circ (1-x)$ queries.  Over all $i$, the total number of queries 
is $O(\log(n)^3)$ and the error is at most $3 \log(n)/n^3 = O(1/n^2)$.  
\end{proof}

\begin{theorem}
\label{thm:con_master}
Let $G = (V,w)$ be an $n$-vertex weighted graph and $A$ its adjacency matrix.  There is a randomized algorithm that outputs a spanning forest of $G$ with probability at least $4/5$ after 
$O(\log(n)^4)$ many $[Av]_{>0} \circ (1-v)$ queries.
\end{theorem}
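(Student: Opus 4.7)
The plan is to combine \cref{thm:main} with \cref{lem:recover} in a direct, modular way. \cref{thm:main} gives a generic randomized spanning forest algorithm (\cref{alg:generic_spanning}) that succeeds with probability at least $49/50$ while making only $T = 3(\log(n)+10) = O(\log n)$ calls to $\Recover[A](R,S,\delta)$, with $\delta = (300(\log(n)+10))^{-1}$, and crucially every such call uses disjoint subsets $R, S \subseteq V$. On the other side, \cref{lem:recover} tells us that on disjoint inputs $R, S$ we can implement $\Recover[A](R,S,1/n^2)$ using $O(\log^3 n)$ many $[Av]_{>0} \circ (1-v)$ queries. The theorem will follow by plugging the second implementation into the first.

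Concretely, I would proceed as follows. First I instantiate \cref{alg:generic_spanning} on the weighted input graph $G$, and whenever the algorithm issues a call $\Recover[A](R,S,\delta)$, I would simulate that call via the master-model procedure from \cref{lem:recover}. Since $1/n^2 \le (300(\log(n)+10))^{-1} = \delta$ for all sufficiently large $n$, the error guarantee of \cref{lem:recover} is at least as strong as what \cref{thm:main} requires, so the analysis of \cref{thm:main} applies verbatim and yields overall success probability at least $49/50 \ge 4/5$. The total query cost is the number of $\Recover$ calls, $T = O(\log n)$, times the per-call cost $O(\log^3 n)$ from \cref{lem:recover}, for a grand total of $O(\log^4 n)$ queries in the master model.

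There is no real obstacle here; the substantive work was already done in \cref{thm:main} (the Bor\r{u}vka-style reduction and its analysis via the variable $Q_i$) and \cref{lem:recover} (the group-testing simulation of single-element recovery on rows). The only small points worth mentioning in the write-up are (i) that disjointness of $R$ and $S$ is preserved, so the hypothesis of \cref{lem:recover} is always met, and (ii) that the error budget in \cref{thm:main} already absorbs the per-call failure probability of $\Recover$, so no additional union bound is needed beyond noting $1/n^2 \le \delta$. With these two remarks the theorem follows immediately.
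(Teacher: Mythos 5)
Your proof is correct and follows exactly the paper's approach: the paper's own proof of \cref{thm:con_master} consists of the single line ``This follows from \cref{thm:main} and \cref{lem:recover},'' and you have simply spelled out the composition (the disjointness of $R,S$, the error-budget accounting using $1/n^2 \le \delta$ for large $n$, and the $T \cdot O(\log^3 n) = O(\log^4 n)$ query count). Nothing is missing.
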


\begin{proof}
This follows from \cref{thm:main} and \cref{lem:recover}.
\end{proof}

\section{Applications}
In this section we look at consequences of \cref{thm:con_master} to algorithms with matrix-vector multiplication queries to the adjacency matrix, quantum algorithms with cut queries, and quantum 
algorithms with bipartite independent set queries.

\subsection{Matrix-vector multiplication queries}
Let $G = (V,E)$ be a simple $n$-vertex graph.  There are several different ways one can represent $G$ by a matrix, for example by its adjacency 
matrix $A \in \{0,1\}^{n \times n}$, or its \emph{signed vertex-edge incidence matrix} $A_{\pm} \in \{-1, 0, 1\}^{n \times \binom{n}{2}}$.
Sun et al.\ study the complexity of graph problems with matrix-vector multiplication queries, and ask the question if some matrix representations allow for much 
more efficient algorithms than others \cite[Question 4]{SWYZ19}.  They point out that the sketching algorithm for connectivity of Ahn, Guha, and McGregor \cite{AGM12} can be phrased 
as a matrix-vector multiplication query algorithm to the signed vertex-edge incidence matrix $A_\pm$.  Specifically, the AGM algorithm gives a randomized \emph{non-adaptive} algorithm 
that can output a spanning forest of an $n$-vertex graph after $O(\log^4(n))$ matrix-vector multiplication queries to $A_{\pm}$, where the queries are Boolean vectors. 
\footnote{The literal translation of the AGM algorithm uses $O(\log^3(n))$ many queries by vectors whose entries have $O(\log n)$ bits.  When translated to queries by Boolean vectors this 
results in $O(\log^4(n))$ queries.}  

In contrast, Sun et al.\ show that there is a bipartite graph $G$ such that when one is given matrix-vector multiplication query access \emph{on the right} to the \emph{bipartite} adjacency matrix 
of $G$, $\Omega(n/\log(n))$ many matrix-vector multiplication queries are needed to determine if $G$ is connected.  However, the instances they give can be solved by a \emph{single} matrix-vector 
multiplication query to the full adjacency matrix of $G$ (or a single matrix-vector multiplication query on the left to the bipartite adjacency matrix).

This leaves open the natural question if connectivity can be used to separate the matrix-vector multiplication query complexity of the adjacency matrix versus the signed vertex-edge incidence matrix 
representation.  As one can clearly simulate a query $[Ax]_{>0} \circ (1-x)$ in the master model by a matrix-vector multiplication query, our results show that connecitivity 
does not separate these models with the current state-of-the-art.

\begin{corollary}
\label{cor:mv}
Let $G = (V,w)$ be an $n$-vertex weighted graph.  There is an randomized algorithm that outputs a spanning forest of $G$ with probability at least $49/50$ after 
$O(\log(n)^4)$ matrix-vector multiplication queries to the adjacency matrix of $G$.
\end{corollary}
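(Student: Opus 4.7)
The plan is to reduce directly to \cref{thm:con_master} by observing that a single matrix-vector multiplication query to the adjacency matrix $A$ is strictly more informative than a single master-model query of the form $v \mapsto [Av]_{>0} \circ (1-v)$. The reduction is entirely post-processing: given the response $Av \in \R^n$ to a real matrix-vector multiplication query, the caller can locally form the Boolean vector $[Av]_{>0}$ by thresholding each coordinate (writing $1$ where the coordinate is strictly positive and $0$ otherwise), and then take the entrywise product with $1-v$. Both operations are classical, free, and require no further graph access.

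First I would state this simulation as a one-line observation: any algorithm whose queries are of the form $v \mapsto [Av]_{>0} \circ (1-v)$ can be executed on top of an oracle that returns $v \mapsto Av$, with the same total number of queries, the same adaptivity structure, and no increase in error probability. Then I would invoke \cref{thm:con_master} on the weighted graph $G = (V,w)$ with adjacency matrix $A$ to obtain an $O(\log^4(n))$-query master-model algorithm for producing a spanning forest, and translate each of its queries into a single adjacency-matrix-vector multiplication query via the simulation above. This yields an algorithm with the claimed $O(\log^4(n))$ query upper bound.

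There is no genuine obstacle here; the content of the corollary is conceptual, recording that matrix-vector multiplication queries dominate master-model queries. The only point worth flagging is that the statement covers weighted graphs, so one must check that the reduction does not break when the entries of $A$ are nonnegative reals rather than Boolean. But the thresholding $[\,\cdot\,]_{>0}$ is defined precisely for this setting in \cref{prop:boolean}, and \cref{thm:con_master} already accommodates weighted adjacency matrices, so the simulation applies unchanged. The $49/50$ success probability comes from the analysis of \cref{thm:main} together with the $O(1/n^2)$-per-call error bound in \cref{lem:recover}, since the matrix-vector simulation is exact and introduces no additional failure probability.
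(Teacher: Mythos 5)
Your proof is correct and takes essentially the same route as the paper: the paper also observes that a master-model query $[Av]_{>0} \circ (1-v)$ can be simulated by a single matrix-vector multiplication query to $A$ via local post-processing (thresholding and masking by $1-v$), and then invokes \cref{thm:con_master}. The paper treats this as an immediate consequence and gives no further proof, so your more explicit write-up is just a fuller record of the same argument.
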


The $O(\log(n)^4)$ complexity matches the upper bound in the signed vertex-edge incidence matrix representation, however, the 
latter algorithm is non-adaptive while the adjacency matrix algorithm is not.  It is still possible that connectivity provides a large separation between the
non-adaptive adjacency matrix-vector multiplication and non-adaptive signed vertex-edge incidence matrix-vector multiplication models.

\subsection{Quantum cut queries}
Let $G = (V,w)$ be an $n$-vertex weighted graph.  In this section, we will assume that the edge weights are natural numbers in $\{0, \ldots, M-1\}$.  Let $A \in \{0, \ldots, M-1\}^{n \times n}$ be the adjacency 
matrix of $G$.  In a cut query, one can ask any $z \in \{0,1\}^n$ and receive the answer $(1-z)^T A z$.  A very similar query is a \emph{cross query}.  In a cross query one can query 
any $y,z \in \{0,1\}^n$ that are \emph{disjoint} (i.e.\ $y \circ z = \zero$) and receive the answer $y^T A z$.  It is clear that a cross query can simulate a cut query.  One can also simulate a cross query with 
3 cut queries because for disjoint $y,z$ 
\[
y^T A z = \frac{1}{2} \left((1-y)^T A y + (1-z)^T A z - (1-(y+z))^T A (y+z) \right) \enspace .
\]
Because of this constant factor equivalence we will make use of cross queries when it is more convenient.  

Lee, Santha, and Zhang \cite{LSZ21} give a quantum algorithm to find a spanning forest of a simple $n$-vertex graph 
after $O(\log^8(n))$ cut queries.  A key observation they make is that a quantum algorithm can efficiently simulate a 
restricted version of a matrix-vector multiplication query to the adjacency matrix.  This follows from the 
following lemma, which is an adaptation of the Bernstein-Vazirani algorithm \cite{BV97}.

\begin{lemma}[{\cite[Lemma 9]{LSZ21}}]\label{lem:exactqalgo}
Let $x \in \{0,1,\ldots, K-1\}^n$ and suppose we have access to an oracle that returns $\sum_{i \in S}x_i \mod K$ for any $S \subseteq [n]$. Then there exists a quantum algorithm that learns $x$ with $O(\ceil{\log(K)})$ queries without any error.
\end{lemma}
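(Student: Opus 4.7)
The plan is to adapt the Bernstein--Vazirani algorithm to $\Z_K$ sums in two stages: first boost the given subset oracle into an oracle for fully weighted $\Z_K$-linear combinations at a cost of $k = \ceil{\log K}$ queries, then apply a single Bernstein--Vazirani-style query to the weighted oracle to recover $x$ exactly.

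\textbf{Boosting to weighted queries.} For any weight vector $w \in \Z_K^n$ I would write each coordinate in binary as $w_i = \sum_{b=0}^{k-1} 2^b w_i^{(b)}$ with $w_i^{(b)} \in \01$, and observe
\[
\sum_{i=1}^n w_i x_i \;\equiv\; \sum_{b=0}^{k-1} 2^b \Big(\sum_i w_i^{(b)} x_i\Big) \pmod{K},
\]
where each inner sum $\sum_i w_i^{(b)} x_i \bmod K$ is precisely the answer returned by the given oracle on the subset $\{i : w_i^{(b)} = 1\}$.  Using one ancilla register per bit~$b$, $k$ queries to the subset oracle followed by a reversible unitary that combines the answers as $\sum_b 2^b \cdot (\text{register }b) \bmod K$ then implement the weighted oracle $O_{\mathrm{full}}\ket{w}\ket{y} = \ket{w}\bigket{y + \sum_i w_i x_i \bmod K}$.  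The bit ancillas are uncomputed by running the subset queries in reverse, so the total cost is $O(k)$ subset queries and $O_{\mathrm{full}}$ acts cleanly on just the $\ket{w}\ket{y}$ registers.

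\textbf{Bernstein--Vazirani over $\Z_K$.} I would prepare the weight register in uniform superposition $K^{-n/2}\sum_{w \in \Z_K^n}\ket{w}$ and the answer register in the eigenstate $\ket{\phi} = K^{-1/2}\sum_{y \in \Z_K} \omega_K^{y}\ket{y}$, where $\omega_K = e^{2\pi i/K}$. Applying $O_{\mathrm{full}}$ kicks back the phase $\omega_K^{-\sum_i w_i x_i}$ onto $\ket{w}$, and since the phase factorizes across coordinates the weight register becomes
\[
\bigotimes_{i=1}^n \frac{1}{\sqrt{K}} \sum_{w_i \in \Z_K} \omega_K^{-w_i x_i}\ket{w_i}.
\]
Applying the $\Z_K$ Fourier transform $\mathrm{QFT}_K$ to each coordinate register collapses it to $\ket{x_i}$, so a standard-basis measurement recovers $x$ exactly.

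\textbf{Main obstacle.} The only real subtlety is the bookkeeping needed to make the boosted oracle $O_{\mathrm{full}}$ a genuine unitary with no leftover scratch registers; this is handled by standard uncomputation and adds only a constant factor in queries. Exactness of the recovery step relies on $\mathrm{QFT}_K$ being an exact unitary on a $K$-dimensional register, which holds regardless of whether $K$ is a power of~$2$; since the theorem charges only for queries and not gates, we do not need $\mathrm{QFT}_K$ to be gate-efficient.
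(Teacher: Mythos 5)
Your proof is correct and is, up to minor reorganization, the same adaptation of Bernstein--Vazirani that the cited source [LSZ21, Lemma 9] uses (the paper under review does not reprove the lemma; it imports it). Both exploit the identity $\sum_i w_i x_i \equiv \sum_{b} 2^b \sum_i w_i^{(b)} x_i \pmod K$, use one subset query per bit plane of the weight register to realize the $\Z_K$-weighted inner-product oracle in superposition, and then read $x$ off by a single $\Z_K$-Fourier-transform round with the answer register held in the character eigenstate. Two small remarks: you pay $2\ceil{\log K}$ subset queries because you compute the $k$ ancilla sums and then uncompute them, whereas kicking the phase $\omega_K^{2^b y}$ directly into an eigenstate ancilla for each bit plane avoids uncomputation and gives exactly $\ceil{\log K}$ queries (both are $O(\ceil{\log K})$, so the bound is unaffected); and for non-power-of-two $K$, one should be explicit that the weight superposition is over $\{0,\dots,K-1\}^n$ embedded in the $k$-qubit-per-coordinate register, so $\mathrm{QFT}_K$ acts on the $K$-dimensional subspace, which you already flag.
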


\begin{corollary}[cf.\ {\cite[Lemma 11]{LSZ21}}]
\label{cor:qcut_recover}
Let $G = (V,w)$ be an $n$-vertex weighted graph with integer weights in $\{0,1, \ldots, M-1\}$, and let $A$ be its adjacency matrix.
There is a quantum algorithm that for any $z \in \{0,1\}^n$ perfectly computes $x = Az \circ (1-z)$ with $O(\log(Mn))$ cut queries to $A$.
\end{corollary}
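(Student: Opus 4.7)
The plan is to reduce to Lemma~\ref{lem:exactqalgo} (the Bernstein--Vazirani-style exact learner) by restricting attention to the coordinates where $z_i=0$. Let $T=\{i : z_i = 0\}$, and define the vector $y$ indexed by $T$ via $y_i = (Az)_i$ for $i \in T$. Since $A$ has entries in $\{0,\ldots,M-1\}$, each $(Az)_i$ is a sum of at most $n$ terms each bounded by $M-1$, so $y_i \le (M-1)n$. This means it suffices to regard $y$ as an element of $\{0,\ldots,K-1\}^{|T|}$ with $K := Mn$. Once $y$ is recovered, padding by zeros on $\supp(z) = [n]\setminus T$ yields exactly $Az \circ (1-z)$.

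The only substantive step is simulating the subset-sum oracle required by Lemma~\ref{lem:exactqalgo}. For any $S \subseteq T$,
\[
\sum_{i \in S} y_i \;=\; \sum_{i \in S}(Az)_i \;=\; \1_S^{T} A z,
\]
where $\1_S$ is the indicator vector of $S$. Since $S \subseteq T$ is disjoint from $\supp(z)$, this integer is precisely the answer to the cross query on the disjoint Boolean pair $(\1_S, z)$. As noted in the discussion preceding the corollary, a single cross query can be implemented with $3$ cut queries to $A$, and one then reduces the integer answer modulo $K = Mn$ to obtain the value demanded by the oracle interface of Lemma~\ref{lem:exactqalgo}.

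Putting these together, Lemma~\ref{lem:exactqalgo} makes $O(\lceil\log K\rceil) = O(\log(Mn))$ oracle calls, each simulated by $O(1)$ cut queries to $A$, and recovers $y$ with zero error; extending by zeros on $\supp(z)$ produces $Az \circ (1-z)$ exactly. The main point to be careful about is that Lemma~\ref{lem:exactqalgo} is a quantum algorithm and therefore requires a coherent (unitary) oracle. The simulation above is a deterministic classical computation that wraps a single cut-query oracle call with fixed pre- and post-processing (building $\1_S$ from $S$, and reducing modulo $K$), so it can be packaged as an ancilla-clean reversible circuit in the standard way. This coherence check is the only technical subtlety; once it is handled, the query count and exactness both follow directly from Lemma~\ref{lem:exactqalgo}.
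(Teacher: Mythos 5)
Your proof is correct and follows essentially the same route as the paper: rewrite the subset-sum oracle for the unknown masked vector as a cross query on a disjoint pair (your $(\1_S, z)$, the paper's $(y\circ(1-z), z)$), then invoke Lemma~\ref{lem:exactqalgo}. The only cosmetic differences are that you explicitly restrict to the coordinate set $T=\{i: z_i=0\}$ and pad with zeros afterward rather than keeping length-$n$ vectors with a masked query, and you take $K=Mn$ where the paper takes $K=2Mn$ (both bound the entries since $(Az)_i \le (n-1)(M-1) < Mn$, and both give $O(\log(Mn))$). Your aside about coherently implementing the classical pre/post-processing around the cut-query oracle is a reasonable point of care, though the paper leaves it implicit in its citation of Lemma~\ref{lem:exactqalgo}.
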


\begin{proof}
Let $x = Az \circ (1-z)$.  The entries of $x$ are in $\{0,\ldots, n(M-1)\}$, thus we can work modulo $K=2Mn$ and preserve all entries of $x$.  Moreover, 
$x$ is zero wherever $z$ is one.  Thus for any $y \in \{0,1\}^n$ we have $y^T x = (y \circ (1-z))^T x = (y \circ (1-z))^T A z$, which can be computed with one cross query.
Thus we can apply \cref{lem:exactqalgo} to learn $x$ perfectly with $O(\log(Mn))$ many cut queries to $A$.
\end{proof}

Using \cref{cor:qcut_recover}, classical algorithms using $Az \circ (1-z)$ queries give rise to quantum algorithm using cut queries. Thus we can apply \cref{thm:con_master} to obtain the 
following quantitative improvement of the result of \cite{LSZ21}.

\begin{theorem}
\label{thm:qcut}
Given cut query access to an $n$-vertex graph $G$ with integer weights in $\{0,1,\ldots, M-1\}$ there is a quantum algorithm making $O(\log^4(n) \log(Mn))$ queries that 
outputs a spanning forest of $G$ with at least probability $49/50$.
\end{theorem}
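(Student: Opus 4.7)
The plan is to combine the classical algorithm of \cref{thm:con_master} with the quantum simulation of \cref{cor:qcut_recover}. \cref{thm:con_master} gives a randomized algorithm that outputs a spanning forest of $G$ with probability at least $4/5$ using $O(\log^4(n))$ queries of the form $[Av]_{>0} \circ (1-v)$ to the adjacency matrix $A$. The only thing left to do is to implement each such master-model query with cut queries, which is precisely what \cref{cor:qcut_recover} provides.

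More concretely, I would run the algorithm of \cref{thm:con_master} on $G$, and whenever it issues a master-model query with vector $v \in \{0,1\}^n$, invoke the quantum procedure of \cref{cor:qcut_recover} to compute $Av \circ (1-v)$ exactly using $O(\log(Mn))$ cut queries. From the exact value of $Av \circ (1-v)$ we read off $[Av]_{>0} \circ (1-v)$ entrywise for free, so this faithfully simulates each master-model query. Since the simulation is perfect and introduces no additional failure probability, the overall success probability remains at least $4/5$ (and with a minor bookkeeping adjustment of constants in \cref{thm:con_master} we can push this up to $49/50$, as stated). Multiplying the $O(\log^4 n)$ outer queries by the $O(\log(Mn))$ cost of each simulation gives the desired $O(\log^4(n)\log(Mn))$ bound on the total number of cut queries.

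There is essentially no obstacle here: both ingredients have already been established, and the argument is a clean black-box composition. The only thing to double-check is that every master-model query produced by \cref{alg:generic_spanning} (and by the implementation of $\Recover$ in \cref{lem:recover}) is indeed of the form $[Av]_{>0} \circ (1-v)$ for some $v \in \{0,1\}^n$, which is immediate from the construction, and that the weights are integers bounded by $M$ so that \cref{cor:qcut_recover} applies with query cost $O(\log(Mn))$. With these observations the proof reduces to a one-line citation of \cref{thm:con_master} and \cref{cor:qcut_recover}.
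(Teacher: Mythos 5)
Your proof is correct and matches the paper's approach exactly: the paper's own proof is the same one-line composition of \cref{thm:con_master} and \cref{cor:qcut_recover}. You also correctly flag the small discrepancy between the $4/5$ in the statement of \cref{thm:con_master} and the $49/50$ claimed in \cref{thm:qcut} (which the paper itself glosses over); since the quantum simulation in \cref{cor:qcut_recover} is exact and the underlying argument via \cref{thm:main} and \cref{lem:recover} already supports $49/50$, the stronger constant is indeed justified.
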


\begin{proof}
The proof follows from \cref{cor:qcut_recover} and \cref{thm:con_master}.
\end{proof}

\subsection{Quantum bipartite independent set queries}
Let $G = (V,E)$ be a simple graph and $A$ its adjacency matrix.  In a bipartite independent set query, one can query any \emph{disjoint} $y,z \in \{0,1\}^n$ and 
receive the answer $[y^T A z]_{> 0} \in \{0,1\}$.  A query in the master model can be simulated by a quantum algorithm making $\tilde O(\sqrt{n})$ bipartite independent 
set queries.  The key to this result is the following theorem of Belovs about combinatorial group testing with quantum algorithms.

\begin{theorem}[Belovs \cite{Belovs15}]
\label{thm:belovs}
Let $x \in \{0,1\}^n$.  There is a quantum algorithm that outputs $x$ with probability at least $2/3$ after $O(\sqrt{n})$ OR queries to $x$.  
\end{theorem}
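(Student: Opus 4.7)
The statement is quoted from Belovs~\cite{Belovs15}, so my plan mirrors his approach via the dual adversary bound / span programs. The algorithmic idea decomposes naturally into two stages. First, use quantum amplitude estimation on the singleton OR queries $\OR_{\{i\}}(x) = x_i$ to obtain a constant-factor estimate of the Hamming weight $|x|$; this costs $O(\sqrt{n})$ queries and lets the subsequent search routines calibrate their iteration counts. Second, apply Grover-style search to locate the nonzero coordinates, using the singleton OR queries as the marked-element oracle and restricting to coordinates not yet identified.

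A naive iteration of Grover search---find a nonzero, remove it, repeat, with amplitude amplification at each step---would yield $\sum_{j=1}^{|x|} O\!\bigl(\sqrt{n/(|x|-j+1)}\bigr) = O(\sqrt{n\,|x|})$ queries, matching the target only when $|x|$ is bounded. To obtain the uniform $O(\sqrt{n})$ bound I would replace this black-box iteration with a single span program whose positive and negative witness sizes are both $O(\sqrt{n})$ for every target $x$. Concretely, one assigns input vectors to each pair $(i,b) \in [n] \times \{0,1\}$, chooses weights that balance the witness sizes uniformly in $x$, and then invokes Reichardt's compiler to extract a quantum algorithm whose query complexity equals the geometric mean of the two witness sizes, namely $O(\sqrt{n})$. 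Amplifying constant-factor success to $2/3$ uses a standard majority vote on $O(1)$ independent runs.

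The step I expect to be the main obstacle is the witness size balancing: one must argue that for every possible target $x \in \{0,1\}^n$, positive witnesses supporting each coordinate assignment $(i, x_i)$ can be combined to total size $O(\sqrt{n})$, and likewise that negative witnesses rejecting every $x' \neq x$ simultaneously come in at $O(\sqrt{n})$. This demands a careful choice of weights in the span program, and is the real content of Belovs' construction; the rest of the argument---amplitude estimation for $|x|$, the Reichardt compiler, and probability amplification---is essentially routine.
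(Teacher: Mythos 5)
The paper does not prove \cref{thm:belovs}; it is stated verbatim as an external result cited from Belovs \cite{Belovs15}, so there is no in-paper argument to compare your sketch against. With that caveat, your reconstruction is in the right neighbourhood at the highest level: Belovs' proof does go through the dual adversary bound / span program route, and witness-size balancing is indeed the technical crux. Two points, however, are off. First, the preliminary amplitude-estimation stage to estimate $|x|$ is not part of Belovs' construction and is in fact superfluous: since the span program must already have both witness sizes $O(\sqrt{n})$ uniformly over all $x$, there is no parameter to calibrate. Second, and more substantively, you index the span program's input vectors by pairs $(i,b) \in [n] \times \{0,1\}$, which models the singleton oracle $\OR_{\{i\}}(x) = x_i$, i.e.\ an ordinary membership query. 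With singleton queries alone, learning an arbitrary $x \in \{0,1\}^n$ has a quantum query lower bound of $\Omega(n)$, so no choice of weights in such a span program can possibly attain $O(\sqrt{n})$. The entire content of Belovs' theorem is that OR queries on \emph{arbitrary subsets} $S \subseteq [n]$ are strictly more powerful, and the adversary/span-program construction must have available-input-vector coordinates indexed by $(S,b)$ pairs (with care to keep the exponentially large index set tractable). Omitting this is a genuine gap: the approach as written would not produce the claimed bound.
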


\begin{corollary}
\label{cor:bipartite}
Let $G = (V,E)$ be an $n$-vertex simple graph, and let $A$ be its adjacency matrix.
There is a quantum algorithm that for any $z \in \{0,1\}^n$ computes $Az \circ (1-z)$ with probability at least $1-1/n^3$ after $O(\sqrt{n} \log(n))$ bipartite independent set queries to $G$.
\end{corollary}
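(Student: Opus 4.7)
The plan is to reduce the task to combinatorial group testing on an appropriate Boolean vector, and then invoke Belovs's quantum group testing algorithm (\cref{thm:belovs}) together with a standard amplification.

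Let $S = \supp(z)$ and let $y \in \{0,1\}^n$ be the vector whose $i$th entry is $1$ iff $i \notin S$ and $i$ has a neighbor in $S$. Since $A$ is $\{0,1\}$-valued, \cref{prop:boolean} gives $y = [Az]_{>0}\circ(1-z) = (A \vee z) \circ (1-z)$, which is the master-model query we need to simulate. By construction $y_i = 0$ for every $i \in S$, so it suffices to learn the restriction $y' \in \{0,1\}^{n - |S|}$ of $y$ to $[n] \setminus S$.

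The key observation is that every OR query on $y'$ can be implemented by a single BIS query to $G$. For any $T \subseteq [n] \setminus S$, we have $\bigvee_{i \in T} y'_i = 1$ iff some $i \in T$ has a neighbor in $S$, iff there is at least one edge between $T$ and $S$. Because $T$ and $S$ are disjoint by construction, the BIS query on the indicator vectors of $T$ and of $S$ returns exactly this bit. Applying \cref{thm:belovs} to $y'$ then yields a quantum procedure that recovers $y'$ exactly with probability at least $2/3$ using $O(\sqrt{n - |S|}) = O(\sqrt{n})$ BIS queries.

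To boost the success probability from $2/3$ to $1 - 1/n^3$, I would run $k = \Theta(\log n)$ independent copies of this procedure to obtain candidate vectors $y'^{(1)}, \dots, y'^{(k)}$ and output their coordinate-wise majority. Since each copy equals $y'$ independently with probability at least $2/3$, a Chernoff bound shows that strictly more than $k/2$ of them coincide with $y'$ except with probability $2^{-\Omega(k)} \le 1/n^3$ for a suitable constant, and in that event the coordinate-wise majority is exactly $y'$. Combining with the known zero entries on $S$ reconstructs $y$, for a total cost of $O(\sqrt{n} \log n)$ BIS queries. I do not anticipate a real obstacle: the content is just the clean reduction of the master-model query to OR queries on $y'$, after which \cref{thm:belovs} and the amplification are immediate.
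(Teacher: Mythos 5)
Your proof is correct and follows essentially the same route as the paper: reduce to OR-query access to the Boolean vector $[Az]_{>0}\circ(1-z)$ via the observation that a BIS query on disjoint index sets realizes such an OR query, invoke Belovs's quantum group-testing algorithm (\cref{thm:belovs}), and amplify by majority over $O(\log n)$ independent repetitions. You have also, correctly, read the target as the Boolean vector $[Az]_{>0}\circ(1-z)$ rather than the integer-valued $Az\circ(1-z)$ literally appearing in the statement, which is what the paper's proof actually reconstructs (BIS queries return only Boolean information) and what the downstream application via \cref{thm:con_master} consumes.
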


\begin{proof}
Let $x = Az \circ (1-z)$.  For any $y \in \{0,1\}^n$ we have $y^T x = (y \circ (1-z))^T x = (y \circ (1-z)) A z$, thus $[y^T x]_{> 0}$ can be computed with a single bipartite independent 
set query to $G$ as $y \circ (1-z)$ and $z$ are disjoint.  Therefore by \cref{thm:belovs} and error reduction, with $O(\sqrt{n}\log(n))$ bipartite independent set 
queries we can compute $x$ with probability at least $1-1/n^3$.
\end{proof}

\begin{theorem}
Let $G = (V,E)$ be an $n$-vertex simple graph.  There is a quantum algorithm which outputs a spanning forest of $G$ with probability at least $4/5$ after 
$O(\sqrt{n} \log(n)^5)$ bipartite independent set queries to $G$.
\end{theorem}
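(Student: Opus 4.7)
The plan is to combine \cref{thm:con_master} with \cref{cor:bipartite} in exactly the same modular way that \cref{thm:qcut} is obtained from \cref{thm:con_master} and \cref{cor:qcut_recover}. Namely, I will run the randomized algorithm guaranteed by \cref{thm:con_master} on the adjacency matrix $A$ of $G$, and each time that algorithm wants to perform an $[Av]_{>0} \circ (1-v)$ query I will instead invoke the quantum subroutine of \cref{cor:bipartite} to produce $Av \circ (1-v)$ (from which $[Av]_{>0} \circ (1-v)$ is obtained by thresholding, and which in the Boolean case coincides with the master-model answer via \cref{prop:boolean}).

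The main bookkeeping is the error analysis. By \cref{thm:con_master}, the classical algorithm uses $O(\log^4(n))$ master-model queries and succeeds with probability at least $4/5$ assuming each query is answered correctly. By \cref{cor:bipartite}, each simulated query uses $O(\sqrt{n}\log n)$ BIS queries and is correct except with probability at most $1/n^3$. A union bound over the $O(\log^4(n))$ simulated queries bounds the total simulation failure probability by $O(\log^4(n)/n^3) = o(1)$, which is absorbed into the success probability without affecting the claimed $4/5$ bound for all sufficiently large $n$ (small $n$ are handled trivially by boosting constants).

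Multiplying the two budgets gives a total of $O(\log^4(n)) \cdot O(\sqrt{n}\log n) = O(\sqrt{n}\log^5(n))$ BIS queries, as claimed. There is no real obstacle here; the work was already done in proving \cref{thm:con_master} (the classical master-model spanning forest algorithm) and \cref{cor:bipartite} (the quantum BIS simulation of a single master-model query via Belovs' quantum group testing algorithm). The content of this theorem is just to observe that these two results compose cleanly, which is precisely the modular structure that the master model was designed to enable.
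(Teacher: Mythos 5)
Your proposal is essentially the paper's own proof: run the master-model spanning forest algorithm of \cref{thm:con_master} and replace each $[Av]_{>0}\circ(1-v)$ query with the quantum BIS simulation of \cref{cor:bipartite}, union-bounding over the $O(\log^4 n)$ simulated queries and multiplying the two budgets to get $O(\sqrt{n}\log^5 n)$. One small bookkeeping point: you start from the $4/5$ success probability stated in \cref{thm:con_master} and then subtract an $o(1)$ simulation error, which gives $4/5 - o(1)$, strictly below the claimed $4/5$; ``boosting constants for small $n$'' does not fix this since the deficit persists for large $n$. The paper avoids the issue by invoking the tighter $49/50$ guarantee that actually comes out of the proof of \cref{thm:con_master} (via \cref{thm:main}), so that $49/50 - O(\log^4 n / n^3) \ge 4/5$ for all large $n$. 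This is a cosmetic slip rather than a conceptual gap, but as written your bound would need that one-line adjustment.
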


\begin{proof}
Let $A$ be the adjacency matrix of $G$.
By \cref{thm:con_master} there is an algorithm that outputs a spanning forest of $G$ with probability at least $49/50$ after $O(\log(n)^4)$ many $[Av]_{> 0} \circ (1-v)$ 
queries.  We can implement an $[Av]_{> 0} \circ (1-v)$ with error at most $1/n^3$ by a quantum algorithm making $O(\sqrt{n}\log(n))$ bipartite independent set 
queries by \cref{cor:bipartite}.  The probability any of the $O(\log(n)^4)$ many $[Av]_{> 0} \circ (1-v)$ queries is computed incorrectly is at most $O(\log(n)^4/n^3)$.  Thus 
there is a quantum algorithm with success probability at least $4/5$ that outputs a spanning forest of $G$ after $O(\sqrt{n} \log(n)^5)$ bipartite independent set queries.
\end{proof}

\section{Lower bound for linear queries}
On input an $n$-vertex graph $G = (V,w)$, a linear query algorithm can ask any $x \in \R^{\binom{n}{2}}$ and receive the 
answer $\braket{w}{x}$.  In this section we show that any deterministic, or even zero-error randomized, algorithm for connectivity must make $\Omega(n)$ linear queries.  
As far as we are aware, this is the first lower bound of any kind for connectivity in the unrestricted linear query model.  

The reason lower bounds against linear queries are difficult to show is because the answer to a query can have an unbounded 
number of bits.  The ``hard'' inputs for a linear query lower bound cannot be limited to simple graphs, as given the promise that a graph is simple 
it can be learned with a single linear query, by querying the vector of powers of 2 of the appropriate dimension.

Most global query lower bounds for connectivity are derived from communication complexity.
It is known that the communication complexity of connectivity is 
$\Theta(n \log n)$ in the deterministic case \cite{HMT88} and $\Omega(n)$ in the bounded-error randomized case \cite{BFS86}.  Moreover, 
the family of instances used to show these lower bounds are all simple graphs.
If the answer to a global query on a simple graph can be communicated with at most $b$ bits, then these communication results imply
$\Omega(n \log(n)/b)$ and $\Omega(n/b)$ deterministic and randomized lower bounds on the query complexity of connectivity in this model, respectively.
In particular, this method gives an $\Omega(n)$ deterministic and $\Omega(n/\log n)$ randomized lower bound on the cut query complexity 
of connectivity.  This approach, however, cannot show \emph{any} lower bound against linear queries.

A lower bound technique which goes beyond considering simple graphs is the \emph{cut dimension} \cite{GPRW20}.  This technique 
was originally developed by Graur et al.\  for showing cut query lower bounds against deterministic algorithms solving the minimum cut problem.  In 
\cite{LLSZ21} it was observed that the lower bound also applies to the linear query model, and a strengthening of the technique was 
given called the \emph{$\ell_1$-approximate cut dimension}.  Here we observe that the $\ell_1$-approximate cut dimension characterizes, up to 
an additive $+1$,
a well known query complexity lower bound technique applied to the minimum cut problem, the \emph{certificate complexity}.  We further adapt this certificate complexity 
technique to the connectivity problem, allowing us to show an $\Omega(n)$ lower bound for deterministic linear query 
algorithms solving connectivity.  By its nature, certificate complexity also lower bounds the query complexity of zero-error randomized 
algorithms, thus we get a lower bound of $\Omega(n)$ for connectivity in this model as well.

We first define the linear query certificate complexity in \cref{sec:def_cert} and show that it is equivalent to the $\ell_1$-approximate cut dimension of \cite{LLSZ21}.  
Then in \cref{sec:apply_cert} we apply this method to show an $\Omega(n)$ lower bound on the linear query certificate complexity of connectivity.

\subsection{Certificate complexity}
\label{sec:def_cert}
While our main focus is connectivity, the ideas here apply to the minimum cut problem as well, so we treat both cases.
We will need some definitions relating to cuts.  Let $G = (V,w)$ be a weighted graph.  For $\emptyset \ne S \subsetneq V$ the cut $\Delta_G(S)$
corresponding to $S$ is the set of edges of $G$ with exactly one endpoint in $S$.  When the graph is clear from context we will drop the subscript.  
We call $S$ a \emph{shore} of the cut.  Every cut has two shores.
When we wish to speak about a unique shore, for example when enumerating cuts, we will take the shore not containing some distinguished vertex $v_0 \in V$.  
The weight of a cut $w(\Delta_G(S))$ is $\sum_{e \in \Delta_G(S)} w(e)$.  We let $\lambda(G)$ denote the weight of a minimum cut in $G$.

\begin{definition}[$\con$ and $\mincut$]
The input in the $\con$ and $\mincut$ problems is an $n$-vertex weighted undirected graph $G = (V,w)$.  In the $\con$ problem 
the goal is to output if $\lambda(G) > 0$ or not.  In the $\mincut$ problem the goal is to output $\lambda(G)$.  
\end{definition}

A deterministic algorithm correctly solves the $\mincut$ problem if it outputs $\lambda(G)$ for every $n$-vertex input graph $G$.  
We let $D_{\lin}(\mincut)$ denote the minimum, over all deterministic linear query algorithms $\Acal$ that correctly solve $\mincut$, 
of the maximum over all $n$-vertex input graphs $G = (V,w)$ of the number of linear queries made by $\Acal$ on $G$.  The deterministic linear 
query complexity of connectivity, $D_{\lin}(\con)$, is defined analogously.

We will also consider zero-error randomized linear query algorithms.  A zero-error randomized linear query algorithm for $\mincut$ is a probability distribution over 
deterministic linear query algorithms that correctly solve $\mincut$.  The cost of a zero-error randomized algorithm $\Acal$ on input $G = (V,w)$ is the 
expected number of queries made by $\Acal$.  We let $R_{0,\lin}(\mincut)$ denote the minimum over all zero-error randomized algorithms $\Acal$ for $\mincut$ 
of the maximum over all $G$ of the cost of $\Acal$ on $G$.  $R_{0,\lin}(\con)$ is defined analogously.  Clearly $R_{0,\lin}(\mincut) \le D_{\lin}(\mincut)$ and 
$R_{0,\lin}(\con) \le D_{\lin}(\con)$

We will investigate a lower bound technique called certificate complexity.
\begin{definition}[Certificate complexity]
Let $G = (V,w)$ be an $n$-vertex graph.  
The minimum-cut linear-query certificate complexity of $G$, denoted $\mincutcert(G)$, is the minimum $k$ 
such that there is a matrix $A \in \R^{k \times \binom{n}{2}}$ with the property that, for any graph $G' = (V,w')$, if 
$Aw = Aw'$ then $\lambda(G) = \lambda(G')$.

The connectivity linear-query certificate complexity, denoted $\concert(G)$, is the minimum $k$ 
such that there is a matrix $A \in \R^{k \times \binom{n}{2}}$ with the property that, for any graph $G' = (V,w')$, if 
$Aw = Aw'$ then $\lambda(G') > 0$ iff $\lambda(G) > 0$.  
\end{definition}

It is a standard fact that certificate complexity is a lower bound on zero-error randomized query complexity.
We reproduce the proof here for completeness.
\begin{lemma}
\label{lem:R0}
For any $n$-vertex graph $G=(V,w)$
\begin{align*}
R_{0,\lin}(\mincut) &\ge \mincutcert(G) \\
R_{0,\lin}(\con) &\ge \concert(G) \enspace .
\end{align*}
\end{lemma}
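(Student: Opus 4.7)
The plan is to exploit the standard observation that the transcript of any deterministic linear-query algorithm on an input $G$ serves as a linear-query certificate for whatever value the algorithm outputs on $G$. I would begin by fixing an arbitrary zero-error randomized algorithm $\Acal$ for $\mincut$, viewed as a distribution over deterministic linear-query algorithms, each of which correctly computes $\lambda$ on every input. Pick any deterministic $\Acal'$ in the support of $\Acal$ and consider its execution on the specific input $G = (V,w)$. Because $\Acal'$ is deterministic and its queries are adaptive functions only of previously observed answers, the sequence $q_1, \ldots, q_k \in \R^{\binom{n}{2}}$ of queries it issues on $G$ is a specific fixed list, where $k$ is the number of queries $\Acal'$ makes on $G$.

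The key step is to verify that $q_1, \ldots, q_k$ form a $\mincut$-certificate for $G$. Given any $G' = (V,w')$ with $\braket{w'}{q_i} = \braket{w}{q_i}$ for every $i$, run $\Acal'$ on $G'$: the first query is again $q_1$ (chosen with no prior information), produces the same answer, and by induction on the query index the entire transcript on $G'$ coincides with that on $G$; after exactly $k$ queries $\Acal'$ halts (its halting condition is a function of observed answers alone) and outputs $\lambda(G)$. Since $\Acal'$ is correct on every input, the output must equal $\lambda(G')$, so $\lambda(G) = \lambda(G')$. Stacking the $q_i$ as rows of a matrix in $\R^{k \times \binom{n}{2}}$ yields the required certificate, proving $\mincutcert(G) \le k$; that is, every $\Acal'$ in the support of $\Acal$ makes at least $\mincutcert(G)$ queries on $G$.

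The remainder is pure averaging. Taking the expectation over $\Acal' \sim \Acal$ shows that the cost of $\Acal$ on $G$ is at least $\mincutcert(G)$, hence $\max_{G'} \mathrm{cost}(\Acal, G') \ge \mincutcert(G)$, and minimizing over $\Acal$ gives $R_{0,\lin}(\mincut) \ge \mincutcert(G)$. The argument for $R_{0,\lin}(\con) \ge \concert(G)$ is verbatim the same, except that the output $\lambda(G)$ is replaced throughout by the bit indicating whether $\lambda(G) > 0$, which is exactly the condition preserved by a $\con$-certificate. I do not anticipate a genuine obstacle; the one point requiring a little care is the inductive verification that $\Acal'$ issues identical queries on $G$ and $G'$ and halts after the same number of them, which relies only on $\Acal'$ being deterministic and on its halting rule depending on the answer sequence alone.
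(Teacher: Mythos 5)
Your proof is correct and follows essentially the same route as the paper: show that the queries a deterministic algorithm makes on $G$ must form a certificate (so every deterministic algorithm in the support makes at least $\mincutcert(G)$ or $\concert(G)$ queries on $G$), then average over the randomness. The paper phrases the deterministic step as a contrapositive rather than your explicit transcript-replay induction, but the underlying argument is identical.
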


\begin{proof}
We treat the minimum cut case, the connectivity case follows similarly. 

First consider a deterministic linear query algorithm for minimum cut on input $G$.  This algorithm must make 
at least $\mincutcert(G)$ many queries.  If not, there is a graph $G'$ which agrees with $G$ on all the queries
 but such that $\lambda(G') \ne \lambda(G)$.  As the algorithm does not distinguish $G$ from $G'$ it 
cannot answer correctly.

Consider now a zero-error randomized linear query algorithm $\Acal$ for minimum cut.  Each deterministic algorithm in the support of $\Acal$ correctly 
solves $\mincut$ and so must make at least $\mincutcert(G)$ many 
queries on input $G$.  Thus expected number of queries made by $\Acal$ on input $G$ is at least $\mincutcert(G)$ as well.
\end{proof}

It is useful to think of a certificate in two parts.  Let $A \in \R^{k \times \binom{n}{2}}$ be a minimum cut certificate for $G$.  
Then for any $G' = (V,w')$ with 
$Aw = Aw'$ it must hold that $\lambda(G) = \lambda(G')$.  The condition that $\lambda(G') \le \lambda(G)$ can be certified with 
a single query.  If $S$ is the shore of a minimum cut in $G$ then we can query $\chi_S \in \{0,1\}^{\binom{n}{2}}$, the 
characteristic vector of $\Delta_{K_n}(S)$ where $K_n$ is the complete graph.  Then $\braket{\chi_S}{w} = \lambda(G)$, and 
including $\chi_S$ as a row of $A$ guarantees that if $Aw = Aw'$ then $\lambda(G') \le \lambda(G)$.

The more challenging problem is certifying that the minimum cut of any $G' = (V,w')$ with $Aw = Aw'$ is at least $\lambda(G)$.  
We single out a more general version of this lower bound certification problem, which will be useful to treat the minimum cut and 
connectivity cases together.
\begin{definition}
Let $G = (V,w)$ be a graph and let $0 \le \tau \le \lambda(G)$ be a parameter.  The $\taucerta$ of $G$ is the least $k$ such that there is 
a matrix $A \in \R^{k \times \binom{n}{2}}$ such that for any $G' = (V,w')$ with $Aw = Aw'$ it holds that $\lambda(G') \ge \tau$.
\end{definition}

\begin{lemma}
\label{lem:gencert}
Let $G = (V,w)$ be a graph.  Then
\[
\textrm{at-least-}\lambda(G)\textrm{-cert}(G) \le \mincutcert(G) \le \textrm{at-least-}\lambda(G)\textrm{-cert}(G) +1 \enspace .
\]
If $G$ is connected then $\concert(G) = \inf_{\tau > 0} \taucerta(G)$.
\end{lemma}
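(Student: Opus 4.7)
The plan is to handle the three claims separately. The two $\mincutcert$ inequalities follow from the obvious observations that any $\mincutcert$ matrix already certifies $\lambda(G') \ge \lambda(G)$, and conversely, an at-least-$\lambda(G)$-cert can be extended to a mincut certificate by appending one extra row $\chi_S$, where $S$ is the shore of a minimum cut of $G$ and $\chi_S \in \{0,1\}^{\binom{n}{2}}$ is the characteristic vector of $\Delta_{K_n}(S)$. This extra query pins down $w'(\Delta_{K_n}(S)) = \lambda(G)$, forcing $\lambda(G') \le \lambda(G)$, which combined with the other inequality gives equality of $\lambda$ at the cost of only one additional row.

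For the connectivity equality, the easy direction is that any at-least-$\tau$-cert with $\tau > 0$ is automatically a connectivity certificate of $G$: every consistent $w'$ has $\lambda(G') \ge \tau > 0$, hence $G'$ is connected, matching the hypothesis that $G$ itself is connected. This immediately gives $\concert(G) \le \inf_{\tau > 0} \taucerta(G)$.

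The main obstacle is the reverse direction: converting a connectivity certificate $A$ into an at-least-$\tau$-cert for some specific $\tau > 0$. A naive ``closedness plus $\lambda > 0$'' argument is not directly available because the consistent set $W = \{w' \in \R^{\binom{n}{2}}_{\ge 0} : Aw' = Aw\}$ need not be bounded, so one cannot extract a uniform positive lower bound on $\lambda(G')$ from compactness alone. My plan is to instead apply LP attainment shore by shore. For each of the finitely many non-trivial shores $S$, I define $\mu_S := \min\{\langle \chi_S, w'\rangle : w' \in W\}$. The feasible region is a non-empty polyhedron (it contains $w$) and the objective is bounded below by $0$ since $\chi_S, w' \ge 0$, so the optimum is attained at some $w^*_S \in W$. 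If $\mu_S$ were $0$ then the associated $G^*_S$ would have a zero-weight cut $\Delta_{K_n}(S)$ and be disconnected, contradicting that $A$ is a connectivity certificate for the connected graph $G$. Hence $\mu_S > 0$ for every shore $S$, and $\tau := \min_S \mu_S > 0$ yields $\lambda(G') = \min_S \langle \chi_S, w'\rangle \ge \tau$ on all of $W$, so $A$ is an at-least-$\tau$-cert, completing the proof.
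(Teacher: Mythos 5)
Your proof is correct and follows essentially the same route as the paper: the two $\mincutcert$ inequalities by dropping/adjoining the single row $\chi_S$, the easy direction of the connectivity equality from $\tau > 0$, and the hard direction by a per-shore linear program whose minimum is attained and strictly positive, then taking the minimum over the finitely many shores. Your extra remarks (why compactness alone fails, and why LP attainment holds on an unbounded polyhedron with objective bounded below) just spell out what the paper asserts in one line.
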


\begin{proof}
If $A$ is a $\mincutcert$ for $G$ then in particular it certifies that $\lambda(G') \ge \lambda(G)$ for any $G' = (V,w')$ with $Aw = Aw'$.  This 
shows the first lower bound.  For the other direction, suppose $A$ is an $\textrm{at-least-}\lambda(G)\textrm{-cert}$ for $G$.  Let $S$ be the 
shore of a minimum cut in $G$ and $\chi_S \in \{0,1\}^{\binom{n}{2}}$, the characteristic vector of the cut corresponding to $S$ in the complete graph.
Then adjoining $\chi_S$ as an additional row to $A$ creates a minimum cut certificate for $G$.

Now we show the second part of the lemma.  Let $G$ be a connected graph.  For any $0 < \tau \le \lambda(G)$, if $A$ is an $\taucerta$ certificate for $G$
then it holds that for any $G' = (V,w')$ with $Aw = Aw'$ that $\lambda(G') \ge \tau > 0$ so $G'$ is also connected.  This shows that 
$\concert(G) \le  \inf_{\tau > 0} \taucerta(G)$.

For the other direction, let $A$ be a connectivity certificate for $G$.  We claim that $A$ is also a $\taucerta$ for some $\tau >0$.  
For $\emptyset \ne S \subsetneq V$ let $\chi_S \in \{0,1\}^{\binom{n}{2}}$ be the characteristic vector of $\Delta_{K_n}(S)$, where 
$K_n$ is the complete graph on $n$ vertices.  Define 
\begin{equation*}
\begin{aligned}
\iota(S) = & \underset{w'}{\text{minimize}}
& & \braket{\chi_S}{w'} \\
& \text{subject to}
& & Aw = Aw'  \\
& & & w'  \ge \zero \enspace .
\end{aligned}
\end{equation*}
As $\iota(S)$ is defined by a linear program, the minimum is achieved, and is strictly positive as $A$ is a connectivity certificate.  Thus letting 
$\tau = \min_{\emptyset \ne S \subsetneq V} \iota(S)$ we see that $A$ is an $\taucerta$.  This shows $\inf_{\tau > 0} \taucerta(G) \le \concert(G)$.
\end{proof}

We now proceed to give a linear-algebraic characterization 
of these certificate complexities.  First a definition.

\begin{definition}[universal cut-edge incidence matrix]
Let $K_n$ be the complete graph on vertex set $\{1, \ldots, n\}$.  The universal cut-edge incidence matrix $M_n$ is a Boolean $(2^{n-1}-1)$-by-$\binom{n}{2}$ matrix 
with rows indexed by non-empty sets $S$ with $1 \not \in S$ and columns indexed by edges $e$ and $M_n(S,e)=1$ iff $e \in \Delta_{K_n}(S)$.  
\end{definition}

\begin{definition}
\label{def:cutrank}
Let $G = (V,w)$ be a graph on $n$ vertices and $M_n$ the universal cut-edge incidence matrix.  
For $\tau \in \R$ with $0 \le \tau \le \lambda(G)$ define
\begin{equation*}
\begin{aligned}
\tau \cutrk(G) = \ & \underset{X}{\text{minimize}}
& & \rank(X) \\
& \text{subject to}
& & X \leq M_n \\
& & & X w \ge \tau \enspace .
\end{aligned}
\end{equation*}
\end{definition}

\begin{remark}
The $\lambda(G) \cutrk(G)$ is equivalent to the $\ell_1$ approximate
cut dimension defined in \cite{LLSZ21}.  

We have defined the cut rank where 
$X$ is a matrix with $\binom{n}{2}$ columns as this will be more convenient in the 
proof of \cref{thm:rkequiv}.  If $G$ has $m$ edges, one could instead restrict
$M_n$ and $X$ to matrices with $m$ columns, corresponding to the edges of $G$.  

Indeed, let $\widehat M_n$ be the $(2^{n-1}-1)$-by-$m$ matrix that is $M_n$ with columns restricted 
to edges of $G$ and similarly $\hat w \in \R^m$ be $w$ restricted to edges of $G$.  One can see that the 
program
\begin{equation*}
\begin{aligned}
& \underset{Y}{\text{minimize}}
& & \rank(Y) \\
& \text{subject to}
& & Y \leq \widehat M_n \\
& & & Y \hat w \ge \tau 
\end{aligned}
\end{equation*}
has value equal to $\tau \cutrk(G)$.  It is clear that $\tau \cutrk(G)$ is at most the value of this program as any feasible $Y$ 
can be turned into a feasible $X$ by populating the additional columns with zeros.  

For the other direction, let let $X^*$ realize $\tau \cutrk(G)$ and let $Y^*$ be $X^*$ with columns corresponding to non-edges of $G$ deleted.  
Then $\rk(Y^*) \le \rk(X^*)$, $Y^* \le \widehat M_n$ and $Y^* \hat w \ge \tau$ because $w$ is zero on all entries corresponding to the deleted columns.
When we are actually proving lower bounds in \cref{sec:apply_cert}, it will be more convenient to use this formulation where columns are restricted to edges of $G$.
\end{remark}

\begin{theorem}
\label{thm:rkequiv}
For any $n$-vertex graph $G = (V,w)$ and $0 \le \tau \le \lambda(G)$ we have $\taucerta(G) = \tau \cutrk(G)$.  
\end{theorem}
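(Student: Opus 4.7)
The plan is to prove both inequalities $\taucerta(G) \le \tau\cutrk(G)$ and $\tau\cutrk(G) \le \taucerta(G)$ by directly converting between an optimal certificate matrix $A$ and an optimal low-rank matrix $X \le M_n$ with $Xw \ge \tau$. The bridge between the two formulations is LP duality applied shore-by-shore.

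For the direction $\taucerta(G) \le \tau\cutrk(G)$, I would take an optimal $X$ with $\rank(X) = \tau\cutrk(G)$, $X \le M_n$, and $Xw \ge \tau$, and let $A$ be any matrix whose rows form a basis of the row space of $X$. Then $X = CA$ for some matrix $C$, so for any graph $G' = (V,w')$ with $Aw = Aw'$ we get $Xw = Xw'$. Now fix any non-empty $S \subsetneq V$ with $1 \notin S$, let $\chi_S$ be the corresponding row of $M_n$, and let $r_S$ denote the $S$-th row of $X$. Since $w' \ge 0$ and $r_S \le \chi_S$ entrywise,
\[
w'(\Delta_{G'}(S)) = \braket{\chi_S}{w'} \ge \braket{r_S}{w'} = (Xw')_S = (Xw)_S \ge \tau,
\]
so $\lambda(G') \ge \tau$. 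Hence $A$ is an $\taucerta$ with $\rank(X)$ rows.

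For the direction $\tau\cutrk(G) \le \taucerta(G)$, let $A \in \R^{k \times \binom{n}{2}}$ be an optimal certificate with $k = \taucerta(G)$. For each shore $S$, consider the linear program
\begin{equation*}
\begin{aligned}
\iota(S) = \ & \min_{w'} \ \braket{\chi_S}{w'} \\
& \text{s.t. } Aw' = Aw, \ w' \ge 0.
\end{aligned}
\end{equation*}
The primal is feasible (via $w' = w$) and its optimum is exactly $w'(\Delta_{G'}(S))$ minimized over admissible $G'$, so by the certificate property $\iota(S) \ge \tau$. Strong LP duality applied to this program yields a dual vector $y_S$ (unrestricted in sign) with $A^T y_S \le \chi_S$ and $y_S^T(Aw) \ge \tau$. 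Define $X$ by letting its $S$-th row be $y_S^T A$. Then $X \le M_n$ by construction, $(Xw)_S = y_S^T A w \ge \tau$, and each row of $X$ lies in the row space of $A$, so $\rank(X) \le \rank(A) \le k = \taucerta(G)$.

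The main conceptual step is recognizing that LP duality naturally converts a certificate condition (expressed as a polytope of indistinguishable weight vectors) into a low-rank matrix $X$ whose rows are dual witnesses. The main technical care is ensuring the primal LP is feasible and bounded so that strong duality applies without issue, which is immediate here because $w$ itself is primal-feasible and $\chi_S$ has non-negative entries so the primal is bounded below by $0$. Once that is in place, the construction of $X$ from the dual optimizers is essentially mechanical.
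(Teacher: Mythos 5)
Your proof is correct and follows essentially the same strategy as the paper's: the crux in both is to apply strong LP duality shore-by-shore to convert a certificate matrix $A$ into row vectors $X_S$ lying in the row space of $A$ and satisfying $X_S \le \chi_S$ and $\braket{X_S}{w}\ge\tau$. Your LP is parameterized in terms of $w'$ rather than the paper's substitution $z = w - w'$, but the two programs (and their duals) are equivalent up to the affine shift $\braket{\chi_S}{w}$. One genuine simplification on your side: in the direction $\taucerta(G) \le \tau\cutrk(G)$ you do not need duality at all --- from $X \le M_n$ and $w'\ge 0$ you directly get $\braket{\chi_S}{w'} \ge \braket{r_S}{w'} = \braket{r_S}{w} \ge \tau$ whenever $Aw' = Aw$, whereas the paper also routes this easier direction through the $\alpha(S)=\beta(S)$ duality machinery. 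That makes your presentation a bit cleaner, though it does not change what is being proved or how the hard direction is handled.
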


\begin{proof}
We first show $\taucerta(G) \ge \tau \cutrk(G)$.  Suppose the
$\taucerta$ certificate complexity of $G$ is $k$ and let $A$ be a $k$-by-$\binom{n}{2}$ matrix realizing this. 
Let $G' = (V,w')$ be such that $Aw = Aw'$.  Let us write $w' = w-z \ge \zero$ where $Az = \zero$.

Consider a shore $\emptyset \ne S \subsetneq V$.  Let $\chi_S \in \{0,1\}^{\binom{n}{2}}$ be the characteristic vector of 
$\Delta_{K_n}(S)$, where $K_n$ is the complete graph.  Thus $w(\Delta_G(S)) = \braket{\chi_S}{w}$ and 
\[
w'(\Delta_{G'}(S)) = \braket{\chi_S}{w-z} = w(\Delta_G(S)) - \braket{\chi_S}{z} \enspace .
\]
As $A$ is an $\taucerta$ certificate, we must have $w(\Delta_G(S)) - \braket{\chi_S}{z} \ge \tau$ which means 
$\braket{\chi_S}{z} \le w(\Delta_G(S)) - \tau$.  

Consider the optimization problem
\begin{equation*}
\begin{aligned}
\alpha(S) = & \underset{z}{\text{maximize}}
& & \braket{\chi_S}{z} \\
& \text{subject to}
& & Az = \zero \\
& & & w - z \ge \zero \enspace .
\end{aligned}
\end{equation*}
As we have argued, $\alpha(S) \le w(\Delta_G(S)) - \tau$.
The dual of this problem is 
\begin{equation*}
\begin{aligned}
\beta(S) = & \underset{v}{\text{minimize}}
& & \braket{w}{\chi_S-A^Tv} \\
& \text{subject to}
& & \chi_S-A^T v \ge \zero \enspace.
\end{aligned}
\end{equation*}
The primal is feasible with $z = \zero$ so we have strong duality and $\alpha(S) = \beta(S)$.  This means 
there exists a vector $X_S$ in the row space of $A$ with $\chi_S \ge X_S$ and 
$\braket{w}{\chi_S-X_S} \le w(\Delta_G(S)) - \tau$, which implies $\braket{X_S}{w} \ge \tau$.  

As $S$ was arbitrary, this holds for every shore.  Package the vectors $X_S$ as the rows of a matrix $X$.  
This matrix satisfies $X \le M_n$, since $X_S \le \chi_S$, and $Xw \ge \tau$.  Further, $\rk(X) \le k$ as every row of $X$ is in the row space of $A$.

Now we show $\tau \cutrk(G) \ge \taucerta(G)$.  Let $X$ be a matrix realizing $\tau \cutrk(G)$ and let $k = \rk(X)$.
Let $A$ be a matrix whose rows are a basis for the row space of $X$, and so $\rk(A) = k$.  
To show that $A$ is an at-least-$\tau$ certificate for $G$ it suffices to show that $\alpha(S) \le w(\Delta_G(S)) - \tau$ for 
every shore $S$.

Let $X_S$ be the row of $X$ corresponding to shore $S$.  We have that $X_S \le \chi_S$ and that $X_S$ is in the row space of $A$.  
Thus $\alpha(S) = \beta(S) \le \braket{w}{\chi_S - X_S} \le w(\Delta_G(S)) - \tau$.  This completes the proof.
\end{proof}

\begin{corollary}
\label{cor:rk}
Let $G = (V,w)$ be a graph.  Then
\[
\lambda(G)\cutrk(G) \le \mincutcert(G) \le \lambda(G) \cutrk(G) +1 \enspace .
\]
If $G$ is connected then $\concert(G) = \inf_{\tau > 0} \tau \cutrk(G)$.
\end{corollary}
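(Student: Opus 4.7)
The plan is to derive the corollary as a straightforward consequence of combining Theorem~\ref{thm:rkequiv} (which identifies $\taucerta(G)$ with $\tau\cutrk(G)$) with Lemma~\ref{lem:gencert} (which reduces $\mincutcert$ and $\concert$ to $\taucerta$ problems). The proof will be essentially a matter of substitution, with one small subtlety about the range of $\tau$ that needs to be addressed for the connectivity part.

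First I would handle the minimum-cut inequalities. Instantiating Theorem~\ref{thm:rkequiv} at $\tau = \lambda(G)$, which is admissible since this value lies in the allowed range $0 \le \tau \le \lambda(G)$, produces the identity $\textrm{at-least-}\lambda(G)\textrm{-cert}(G) = \lambda(G)\cutrk(G)$. Plugging this identity into the chain
\[
\textrm{at-least-}\lambda(G)\textrm{-cert}(G) \le \mincutcert(G) \le \textrm{at-least-}\lambda(G)\textrm{-cert}(G) + 1
\]
from the first part of Lemma~\ref{lem:gencert} immediately yields $\lambda(G)\cutrk(G) \le \mincutcert(G) \le \lambda(G)\cutrk(G)+1$.

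For the connectivity statement, I would start from the identity $\concert(G) = \inf_{\tau > 0} \taucerta(G)$ provided by the second part of Lemma~\ref{lem:gencert}, which applies because $G$ is assumed connected and hence $\lambda(G) > 0$. Theorem~\ref{thm:rkequiv} gives $\taucerta(G) = \tau\cutrk(G)$ for every $\tau$ in the range $0 < \tau \le \lambda(G)$. The only point requiring brief care is that the infimum in the lemma ranges over all $\tau > 0$, while the theorem is stated only for $\tau \le \lambda(G)$. This gap causes no trouble: for $\tau > \lambda(G)$ no matrix $A$ can be an $\taucerta$ certificate for $G$ itself (since $\lambda(G) < \tau$ violates the requirement applied to $G' = G$), so $\taucerta(G) = +\infty$ on that range and the infimum is effectively achieved on $0 < \tau \le \lambda(G)$. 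Substituting $\taucerta(G) = \tau\cutrk(G)$ on this range yields $\concert(G) = \inf_{\tau > 0} \tau\cutrk(G)$.

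I do not expect any genuine obstacle; everything boils down to invoking the two previous results and checking a small matter of domain compatibility. If anything, the only step worth being explicit about is the monotonicity/infeasibility argument that lets us safely swap the infimum over $\tau > 0$ for one over $0 < \tau \le \lambda(G)$ before applying Theorem~\ref{thm:rkequiv}.
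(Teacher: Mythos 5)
Your proof is correct and takes exactly the route the paper intends: the corollary is stated without an explicit proof precisely because it is the direct composition of Lemma~\ref{lem:gencert} with Theorem~\ref{thm:rkequiv}, instantiated at $\tau=\lambda(G)$ for the min-cut bounds and at varying $\tau$ for connectivity. The one thing worth noting is that your discussion of $\tau>\lambda(G)$ is not actually needed: both $\taucerta(G)$ (Definition~\ref{lem:gencert} area) and $\tau\cutrk(G)$ (Definition~\ref{def:cutrank}) are defined by the paper only for $0\le\tau\le\lambda(G)$, so $\inf_{\tau>0}$ already ranges implicitly over $(0,\lambda(G)]$; your ``$+\infty$ on the rest'' reading is a harmless but redundant extension of the definitions.
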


\subsection{Application to connectivity}
\label{sec:apply_cert}
\begin{theorem}
\label{thm:cycle_cert}
Let $n$ be even and $C_n = ([n],w)$ be a cycle graph on $n$ vertices where all edge weights are $1$.  Then $\concert(C_n) \ge n/4$.
\end{theorem}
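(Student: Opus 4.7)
I would invoke \cref{cor:rk}: since $C_n$ is connected, $\concert(C_n)=\inf_{\tau>0}\tau\cutrk(C_n)$, so it suffices to lower bound $\tau\cutrk(C_n)$ uniformly in $\tau>0$. Using the remark following \cref{def:cutrank}, I may work with the column restriction to $E(C_n)$: it is enough to show that every matrix $Y$ (with rows indexed by shores and $n$ columns indexed by $E(C_n)$) satisfying $Y\leq\widehat M_n$ and $Y\hat w\geq\tau$, where $\hat w=\mathbf{1}$, has rank at least $n/4$.

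The combinatorial idea is to pick $n/2$ singleton shores whose $2$-edge cuts partition $E(C_n)$. Label the cycle edges $e_1,\dots,e_n$ with $e_i=\{i,i+1\bmod n\}$. For each odd $v\in\{1,3,\dots,n-1\}$ let $T_v=\{v\}$, replaced by its complement when $v=1$ so that vertex~$1$ is excluded (as required by the shore convention). Then $\Delta_{C_n}(T_v)=\{e_{v-1},e_v\}$, and since $v$ advances in steps of $2$ these $n/2$ cuts are pairwise disjoint and their union is all of $E(C_n)$.

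The main step is to introduce the auxiliary $(n/2)\times(n/2)$ matrix $Z$ with entries $Z_{v,v'}:=Y_v(e_{v'-1})+Y_v(e_{v'})$, where $Y_v$ is the row of $Y$ indexed by $T_v$. Each column of $Z$ is the sum of two columns of the submatrix of $Y$ with rows restricted to $\{T_v\}$, so $\rank(Z)\leq\rank(Y)$. I would then verify three facts from the defining inequalities of $Y$: (i) $Z_{v,v}=Y_v(e_{v-1})+Y_v(e_v)\geq\tau$, since the other entries of $Y_v$ are $\leq 0$ while its row sum is $\geq\tau$; (ii) $Z_{v,v'}\leq 0$ for $v\neq v'$, since both $e_{v'-1}$ and $e_{v'}$ lie outside $\Delta_{C_n}(T_v)$; and (iii) the row sum $\sum_{v'}Z_{v,v'}=\sum_{e\in E(C_n)}Y_v(e)\geq\tau$, because the chosen cuts partition $E(C_n)$. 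Combining (i)--(iii) gives $Z_{v,v}\geq\tau+\sum_{v'\neq v}|Z_{v,v'}|$, so $Z$ is strictly diagonally dominant and hence invertible by Levy--Desplanques. Therefore $\rank(Y)\geq\rank(Z)=n/2\geq n/4$.

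The main conceptual obstacle is identifying the right auxiliary object: a direct argument that the rows $\{Y_v\}$ are linearly independent is awkward because their off-support entries can be arbitrarily negative and obscure the structure. Pairing columns according to the cut partition converts the soft row-sum inequality into a sharp diagonal-dominance estimate, after which linear independence follows immediately.
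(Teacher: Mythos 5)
Your argument is correct and follows the same blueprint as the paper's proof: choose $n/2$ singleton shores whose $2$-edge cuts partition $E(C_n)$, restrict the certificate matrix to the cycle's edge columns, and merge columns in pairs according to this cut partition to obtain an $(n/2)\times(n/2)$ matrix with strictly positive diagonal, nonpositive off-diagonal, and strictly positive row sums. The only substantive difference is how you finish. The paper normalizes the diagonal to $1$, applies the Gershgorin circle theorem to bound every eigenvalue by $2$, and divides the trace $n/2$ by this bound to conclude $\rank \ge n/4$. You instead observe that the sign pattern combined with the row-sum inequality makes the matrix strictly diagonally dominant, hence invertible by Levy--Desplanques, giving $\rank(Z) = n/2$ exactly. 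This is both shorter and strictly stronger: it yields $\concert(C_n) \ge n/2$, improving the paper's constant by a factor of two (and the same diagonal-dominance observation would sharpen the paper's own $Y'$ in the same way). The remaining differences are cosmetic---you anchor the singletons at odd vertices with a complement substitution for $\{1\}$ to respect the shore convention $1 \notin S$, while the paper takes even vertices; both choices give a cut partition and work equally well.
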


\begin{proof}
We will show that $\inf_{\tau > 0} \tau \cutrk(C_n) \ge n/4$ which will give the theorem by \cref{cor:rk}.

Let $w$ be the weight vector of $C_n$ and let $X$ be a $(2^{n-1}-1)$-by-$\binom{n}{2}$ matrix satisfying $X \le M_n$ and $Xw > \zero$.  
We want to show that $\rk(X) \ge n/4$.  Let $Y$ be a $n/2$-by-$n$ submatrix of $X$ where the rows are restricted to the singleton 
shores $\{2\},\{4\},\ldots, \{n\}$ at even numbered vertices, and the columns are restricted to the edges $e_{12}, e_{23}, \ldots, e_{n1}$ of $C_n$.  To show a lower bound 
on the rank of $X$ it suffices to show a lower bound on the rank of $Y$.

The cuts $\Delta(\{2\}), \Delta(\{4\}), \ldots, \Delta(\{n\})$ partition the edges of $C_n$, therefore every pair of rows of $Y$ are disjoint.  
Note that $Xw$ only depends on the columns of $X$ corresponding to edges of $C_n$.  These are the columns we have restricted to in $Y$, 
therefore since $Xw > \zero$, every row sum of $Y$ must be positive.  Also note that entries of $Y$ in the row corresponding to shore $\{2i\}$ can 
only potentially be positive in the columns labeled by $e_{2i-1,2i}$ and $e_{2i,2i+1}$, where addition is taken modulo $n$.

We further modify $Y$ to a square $n/2$-by-$n/2$ matrix $Y'$ by summing together columns $2i-1$ and $2i$ for $i=1, \ldots, n/2$. 
Note that $\rank(Y') \le \rank(Y)$ and now $Y'$ has strictly positive entries on the diagonal, and must be non-positive everywhere else.  
Next we multiply each row by the appropriate number so that the diagonal entry becomes $1$.  This preserves the property that in each row the sum of the positive entries (which is now $1$) 
is strictly greater than the sum of the negative entries, and does not change the rank.  Thus in each row the $\ell_1$ norm of the off-diagonal entries is at most $1$, and by the Gershgorin circle theorem all eigenvalues of $Y'$ are at most $2$.  
On the other hand, the trace of $Y'$ is $n/2$, thus the rank must be at least $n/4$.
\end{proof}

\begin{corollary}
\label{cor:lin_low}
$R_{0,\lin}(\con) \ge n/4.$
\end{corollary}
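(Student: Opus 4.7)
The plan is to obtain \cref{cor:lin_low} as an almost immediate consequence of the two tools already in place: the generic lower bound $R_{0,\lin}(\con) \ge \concert(G)$ from \cref{lem:R0}, which holds for every $n$-vertex graph $G$, and the explicit estimate $\concert(C_n) \ge n/4$ from \cref{thm:cycle_cert} for the even $n$-cycle. Instantiating \cref{lem:R0} at $G = C_n$ and combining with \cref{thm:cycle_cert} gives $R_{0,\lin}(\con) \ge n/4$ whenever $n$ is even.

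To extend the bound to all $n$, I would note that any zero-error algorithm for connectivity on $n$ vertices also solves connectivity on $n'$ vertices for any $n' \le n$ by padding the input with isolated vertices (which contribute zero to the weight vector and can be handled by restricting queries to the relevant coordinates). Hence it suffices to invoke the even-case bound at $n' = 2\lfloor n/2 \rfloor$ and absorb the rounding into the constant, which only weakens the bound by at most $1/4$; alternatively, one can simply state the corollary for even $n$ and remark on the trivial extension.

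The only subtle point I would be careful about is matching the quantifier structure: $R_{0,\lin}(\con)$ is defined as the worst case over graphs of the expected number of queries, so the per-graph lower bound $\concert(C_n)$ from \cref{lem:R0} plugs in directly as a lower bound on this worst case. Since all the real work has already been carried out in \cref{thm:cycle_cert} (which in turn relied on the rank/certificate equivalence of \cref{thm:rkequiv} and \cref{cor:rk}), no new ideas are needed — the proof should be essentially a two-line chain of inequalities with no substantive obstacle.
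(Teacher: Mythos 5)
Your proof is correct and takes essentially the same route as the paper, which simply cites \cref{lem:R0} and \cref{thm:cycle_cert} in a single line. Your additional remarks about extending from even $n$ to all $n$ via padding are reasonable but go slightly beyond what the paper addresses, since \cref{thm:cycle_cert} is only stated for even $n$.
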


\begin{proof}
Follows by \cref{lem:R0} and \cref{thm:cycle_cert}.
\end{proof}

\section{Open problems}
We conclude with some open questions.
\begin{enumerate}
  \item We have shown that connectivity has an efficient algorithm with matrix-vector multiplication queries to the adjacency matrix.  It remains an interesting question 
  to find an example of a graph problem that can be solved much more efficiently with matrix-vector multiplication queries to the signed vertex-edge incidence matrix 
  than with matrix-vector multiplication queries to the adjacency matrix.  Sun et al.\ \cite{SWYZ19} show that one can find a spectral sparsifier of a graph with $\polylog(n)$ 
  matrix-vector multiplication queries to the signed vertex-edge incidence matrix.  This means that one can solve the problem of determining if the edge connectivity 
  of a simple graph is at least $2k$ or at most $k$ with $\polylog(n)$  matrix-vector multiplication queries to the signed vertex-edge incidence matrix.  It is an interesting 
  open question if this can also be done with $\polylog(n)$ matrix-vector multiplication queries to the adjacency matrix.
  \item What is the randomized \emph{non-adaptive} complexity of connectivity with matrix-vector multiplication queries to the adjacency matrix?  The $O(\log^4(n))$ query 
  algorithm in the signed vertex-edge incidence matrix model is non-adaptive, but we do not see how to design a non-adaptive algorithm making $\polylog(n)$ matrix-vector multiplication queries to the adjacency 
  matrix.
  \item How large can the minimum-cut linear-query certificate complexity of a graph be? 
  \item What is the bounded-error randomized linear query complexity of connectivity?  It seems that new ideas are needed to show lower bounds for the bounded-error model.
\end{enumerate}

\section*{Acknowledgements}
TL would like to thank Simon Apers, Yuval Efron, Pawel Gawrychowski, Sagnik Mukhopadhyay, Danupon Nanongkai, and Miklos Santha for many conversations
which contributed ideas to this paper.  TL also thanks Deeparnab Chakrabarty for answering questions about \cite{ACK21}.  TL is supported in part by the Australian 
Research Council Grant No: DP200100950.


\newcommand{\etalchar}[1]{$^{#1}$}

\end{document}